\newcommand{\e}{\mathop{\mathrm e}\nolimits}    
\def\BState{\State\hskip-\ALG@thistlm}
\newcounter{theorem}
\newtheorem{theorem}{Theorem}
\newtheorem{lemma}{Lemma}
\newtheorem{claim}[theorem]{Claim}
\newcounter{definition}
\newtheorem{definition}{Definition}
\newcommand{\comment}[1]{}
\newcommand{\beq}{\begin{equation}}
\newcommand{\eeq}{\end{equation}}
\newcommand{\bea}{\begin{array}}
\newcommand{\ena}{\end{array}}
\newcommand{\bds}{\begin {itemize}}
\newcommand{\eds}{\end {itemize}}
\newcommand{\bdf}{\begin{definition}}
\newcommand{\blm}{\begin{lemma}}
\newcommand{\edf}{\end{definition}}
\newcommand{\elm}{\end{lemma}}
\newcommand{\bthm}{\begin{theorem}}
\newcommand{\ethm}{\end{theorem}}
\newcommand{\bprp}{\begin{prop}}
\newcommand{\eprp}{\end{prop}}
\newcommand{\bcl}{\begin{claim}}
\newcommand{\ecl}{\end{claim}}
\newcommand{\bcr}{\begin{coro}}
\newcommand{\ecr}{\end{coro}}
\newcommand{\bquest}{\begin{question}}
\newcommand{\equest}{\end{question}}
\newcommand{\larrow}{{\larrow}}
\newcommand{\nin}{{\not \in}}
\def\urltilda{\kern -.15em\lower .7ex\hbox{\~{}}\kern .04em}
\begin{document}
\title{Anomaly Search with Multiple Plays under Delay and Switching Costs}
\author{Tidhar Lambez and Kobi Cohen
\thanks{Tidhar Lambez and Kobi Cohen are with the School of Electrical and Computer Engineering, Ben-Gurion University of the Negev, Beer-Sheva 84105, Israel. Email: tidharl@post.bgu.ac.il, yakovsec@bgu.ac.il}
\thanks{This work has been submitted to the IEEE for possible publication. Copyright may be transferred without notice, after which this version may
no longer be accessible.}
\thanks{A short version of this paper was presented in the IEEE International Conference on Acoustics, Speech and Signal Processing (ICASSP) 2021 \cite{lambez2021searching_ICASSP}.}
\thanks{This research was supported by the ISRAEL SCIENCE FOUNDATION
(grant No. 2640/20), and by the Israeli National Cyber Bureau via the Cyber Security Research Center at Ben-Gurion University of the Negev.}
}

\maketitle

\begin{abstract}
\label{sec:abstract}
The problem of searching for $L$ anomalous processes among $M$ processes is considered. At each time, the decision maker can observe a subset of $K$ processes (i.e., multiple plays). The measurement drawn when observing a process follows one of two different distributions, depending on whether the process is normal or abnormal. The goal is to design a policy that minimizes the Bayes risk which balances between the sample complexity, detection errors, and the switching cost associated with switching across processes. We develop a policy, dubbed consecutive controlled sensing (CCS), to achieve this goal. On the one hand, by contrast to existing studies on controlled sensing, the CCS policy senses processes consecutively to reduce the switching cost. On the other hand, the policy controls the sensing operation in a closed-loop manner to switch between processes when necessary to guarantee reliable inference. We prove theoretically that CCS is asymptotically optimal in terms of minimizing the Bayes risk as the detection error approaches zero (i.e., the sample complexity increases). Simulation results demonstrate strong performance of CCS in the finite regime as well.\vspace{2mm}

Index Terms - \textnormal{Anomaly detection, controlled sensing, active hypothesis testing, sequential design of experiments.}
\end{abstract}

\section{Introduction}
\label{sec:introduction}

We consider the problem of searching for $L$ anomalous processes among $M$ processes (where $L\leq M$). Borrowing target search terminologies, we refer to the anomalous process as the target, which can be located in any of $M$ cells, whereas empty cells represent the normal processes. At each time, the decision maker can search for the target over $K$ cells ($1 \leq K \leq M$) (i.e., multiple plays). Each cell, at each time when selected, generates a random observation drawn from one of two different distributions $f$ and $g$, depending on whether the process is normal or abnormal. The observations are independent across time and cells.
Each time step incurs a cost $c$ (i.e., cost for delay) until decision is made. In addition, each time the decision maker observes a different set of $K$ processes incurs a switching cost proportional to the number of processes which are different from the previous play, as commonly assumed in online learning settings with switching cost (see \cite{agrawal1988asymptotically, agrawal1990multi} and subsequent studies). The objective is to design a search strategy that minimizes the Bayes risk which balances between the sample complexity, detection errors, and the switching cost associated with switching across processes.

The anomaly detection problem finds applications in intrusion detection in cyber systems for quickly locating compromised nodes by detecting statistical anomalies, spectrum scanning in cognitive radio networks for quickly locating idle channels for transmission, and event detection in sensor networks for quickly locating target events. A special case of a switching cost can be represented by an additional delay due to switching across processes in various data fusion and communication systems.

\subsection{Connection with Sequential Design of Experiments}

The anomaly detection problem considered in this paper has a connection with the classical sequential design of experiments problem (a.k.a. controlled sensing \cite{nitinawarat2013controlled} or active hypothesis testing \cite{naghshvar2013active} in recent studies) first studied by Chernoff \cite{chernoff1959sequential}. Compared with the classical sequential hypothesis testing pioneered by Wald~\cite{Wald_1947_Sequential}, where the observation model under each hypothesis is predetermined, the sequential design of experiments has a control aspect that allows the decision maker to choose the experiment to be conducted at each time. Intuitively, as more observations are collected, the decision maker becomes more certain about the true hypothesis, which in turn leads to better choices of experiments. Chernoff has established a \emph{randomized} strategy, referred to as the Chernoff test which is asymptotically optimal in terms of minimizing the detection delay as the error probability diminishes. Variations and extensions of the problem and the Chernoff test were studied and suggested recently in~\cite{nitinawarat2013controlled, nitinawarat2012controlled, nitinawarat2015controlled, liu2015adaptive,  naghshvar2013sequentiality, naghshvar2013active, cohen2015active, kaspi2017searching,  huang2018active, tsopelakos2019sequential, tsopelakos2020sequential, kartik2019fixed, kartik2020testing, gurevich2019sequential, hemo2020searching, gafni2021federated}. When translating the sequential experimental design setting to the realm of the anomaly detection problem considered here, the experiments represent the set of $K$ cells that can be probed at each time. In particular, under each hypothesis that the target is located in a particular cell, the distribution (either $f$ or $g$) of the next observation depends on the cell chosen to be searched. The Chernoff test and its variations~\cite{chernoff1959sequential, nitinawarat2013controlled, nitinawarat2015controlled, naghshvar2013active, cohen2015active, huang2018active, tsopelakos2019sequential, kartik2019fixed} apply to our problem. However, switching cost was not considered in all these studies, which results in linear order of the accumulated switching cost with time.

There are a few recent studies on sequential experimental design that considered the minimization of switching cost. Specifically, in~\cite{vaidhiyan2015active, vaidhiyan2017neural}, the Chernoff test was modified to penalize for switching of actions, dubbed Sluggish Procedure A. At each decision instance, the next action is determined according to the current posterior probability and the previous action, depending on a Bernoulli random variable. The policy was shown to be asymptotically optimal, but the accumulated switching cost was unbounded with time. Unlike~\cite{vaidhiyan2015active, vaidhiyan2017neural}, here we show that deterministic policy achieves asymptotic optimality with bounded switching cost. In addition to the theoretical asymptotic improvement, we demonstrate significant performance gains in the finite regime as well. In \cite{chen2019active}, the authors developed a deterministic strategy with a bounded switching cost under the anomaly detection setting considered here, but only for the case where a single cell is observed at a time ($K=1$). In this paper, we consider the multiple play case ($K\geq 1$), which adds significant challenges in the algorithm design and performance analysis.

\subsection{Main Results}

Sequential detection problems involving multiple processes are partially-observed Markov decision processes (POMDP) \cite{Castanon_1995_Optimal} which have exponential computational complexity in general. As a result, computing optimal search policies is intractable (except for some special cases of observation distributions as in~\cite{zigangirov1966problem, Castanon_1995_Optimal}). For tractability, a commonly adopted performance measure is asymptotic optimality in terms of minimizing the sample complexity as the error probability approaches zero (see, for example, classic and recent results as overviewed in the previous subsection). The focus of this paper is thus on asymptotically optimal strategies with low computational complexity. We develop a deterministic strategy with multiple plays, dubbed Consecutive Controlled Sensing (CCS), to solve the anomaly detection problem with low-complexity implementations. Specifically, CCS consists of exploration and exploitation phases. During exploration, the cells are probed in a round-robin manner for quickly inferring the most likely cells that contain the target. During exploitation phases, the most informative observations are collected consecutively based on the current belief on the location of the targets. The computational complexity is only linear with the number of cells and time steps (due to updating the beliefs and probing order). The structure of the CCS policy allows to sense processes consecutively in exploitation phases to reduce the switching cost, while the exploration time is shown to be bounded. This is in sharp contrast with the linear order of switching cost under existing search strategies~\cite{chernoff1959sequential, nitinawarat2013controlled, nitinawarat2015controlled, naghshvar2013active, cohen2015active, huang2018active}. We analyze the performance of the CCS policy theoretically, and show that it is asymptotically optimal in terms of minimizing the total cost as the error probability approaches zero. Simulation results demonstrate significant performance gains over existing policies in the finite regime as well.

\subsection{Other Related Work}

Optimal solutions for target search or target whereabout problems have been obtained under some special cases when a single location is probed at a time (i.e., single play). Modern application areas of search problems with limited sensing resources include narrowband spectrum scanning \cite{egan2017fast}, event detection by a fusion center that communicates with sensors using narrowband transmissions~\cite{blum2008energy,cohen2011energy, sery2020analog}, and sensor visual search studied recently by neuroscientists \cite{vaidhiyan2015learning}.
Results under the sequential setting with observation control can be found in~\cite{zigangirov1966problem, Klimko_1975_Optimal, Dragalin_1996_Simple, Stone_1971_Optimal, blum2008energy, cohen2011energy, banerjee2012data, cohen2014optimal}. Specifically, optimal policies were derived in~\cite{zigangirov1966problem, Klimko_1975_Optimal, Dragalin_1996_Simple} for the problem of quickest search over Wiener processes. In~\cite{Stone_1971_Optimal, cohen2014optimal}, optimal search strategies were established under the constraint that switching to a new process is allowed only when the state of the currently probed process is declared. Optimal policies under general distributions and unconstrained search model remain an open question. In this paper we address this question under the asymptotic regime as the error probability approaches zero. Optimal search strategies when a single location is probed at a time and a fixed sample size have been established under binary-valued measurements \cite{Tognetti_1968_An, Kadane_1971_Optimal}, and under known symmetric distributions of continuous observations \cite{Castanon_1995_Optimal}. In this paper, however, we focus on the sequential setting.

Sequential tests for hypothesis testing problems have attracted much attention since Wald's pioneering work on sequential analysis \cite{Wald_1947_Sequential}. The reason for this is due to the sequential tests property of reaching a decision at a much earlier stage than would be possible with fixed-size tests. Wald established the Sequential Probability Ratio Test (SPRT) for a binary hypothesis testing of a single process. Under the simple hypothesis case, the SPRT is optimal in terms of minimizing the expected sample size under given type \emph{$I$} and type \emph{$II$} error probability constraints. For a single process, various extensions for M-ary hypothesis testing and testing of composite hypotheses were studied in~\cite{schwarz1962asymptotic,lai1988nearly, pavlov1991sequential,tartakovsky2002efficient,draglia1999multihypothesis,heydari2018controlled, deshmukh2019sequential}. In these cases, asymptotically optimal performance can be obtained as the error probability approaches zero. In this paper, we focus on an asymptotically optimal strategy for sequential search of a target over multiple processes under constraints on the probing capacity. By contrast, searching for targets without constraints on the probing capacity, where all processes are probed at each given time, were considered by different models in~\cite{song2017asymptotically,Dragalin_1996_Simple,tartakovsky2002efficient,song2019sequential}.

Another set of related studies considered sequential detection over multiple independent processes~\cite{li2009restless, Lai_2011_Quickest, Malloy_2012_Quickest, Tajer_2013_Quick, Caromi_2013_Fast, malloy2014sequential, cohen2015asymptotically, heydari2016quickest, heydari2017quickest, heydari2018quickest}. In particular, in~\cite{Lai_2011_Quickest}, the problem of identifying the first abnormal sequence among an infinite number of i.i.d. sequences was considered. An optimal cumulative sum (CUSUM) test has been established under this setting. Further studies on this model can be found in~\cite{Malloy_2012_Quickest, Tajer_2013_Quick, malloy2014sequential}. While the objective of finding rare events or a single target is considered in~\cite{Lai_2011_Quickest, Malloy_2012_Quickest, Tajer_2013_Quick, malloy2014sequential} is similar to that of this paper, the main difference is that in~\cite{Lai_2011_Quickest, Malloy_2012_Quickest, Tajer_2013_Quick, malloy2014sequential} the search is done over an infinite number of i.i.d processes, where the state of each process (normal or abnormal) is independent of other processes. This results in open-loop search strategies, which is fundamentally different from the setting in this paper. Other recent studies include searching for correlation structures of Markov networks \cite{heydari2016quickest}, searching for a moving Markovian target \cite{leahy2016always}, and approximating optimal policies via deep reinforcement learning techniques\cite{zhong2019deep, livne2020pops}.

\section{System Model and Problem Statement}
\label{sec:problem}

We consider the problem of sequentially detecting $L$ anomalous processes among $M$ processes. We formalize the problem for the case where $L=1$, and discuss the extension for multiple targets detection ($L\geq 1$) in Subsection \ref{ssec:multipleTargets_development}. The anomalous process is referred to as the target, which can be located in any of $M$ cells, whereas empty cells represent the normal processes. If the target is in cell $m$, we say that hypothesis $H_m$ is true, and the a priori probability that $H_m$ is true is denoted by $\pi_m$, where $\sum_{m=1}^{M} \pi_m =1$. 

At each time, $K$ cells can be probed simultaneously by the decision maker (equipped with $K$ probing machines, where $1\leq K\leq M$). When cell $m$ is probed at time $n$, an observation $y_m(n)$ is drawn independently from a known distribution. If hypothesis $H_m$ is true, $y_m(n)$ follows distribution $g(y)$. Otherwise, if hypothesis $H_m$ is false, $y_m(n)$ follows distribution $f(y)$. Let $\mathbf{P}_m$ be the probability measure under hypothesis $H_m$ and $\mathbf{E}_m$ be the operator of expectation with respect to the measure $\mathbf{P}_m$.

Let $\Gamma=(\tau,\delta,\phi)$ be a sequential anomaly detection policy, where $\tau$ is the stopping time when the decision maker finalizes the search and declares the location of the target. 
Let $\delta\in\{1,...,M\}$ be a decision rule, where $\delta=m$ if the decision maker declares that $H_m$ is true, and $\phi=\{\phi(n)\}_{n\geq1}$ is the sequence of selection rules, where $\phi(n)$ is a $K$-sized permutation from the set $\left\{1, ..., M\right\}$, indicating which $K$ cells are probed at time $n$. Let $\mathbf{P}_e(\Gamma)= \sum_{m=1}^{M} \pi_m \alpha_m(\Gamma)$ be the probability of error under policy $\Gamma$, where $\alpha_m(\Gamma)=\mathbf{P}_m(\delta \neq m;\Gamma)$ is the probability of declaring $\delta \neq m$ when $H_m$ is true. 
Let $\mathbf{E}(\tau|\Gamma)=\sum_{m=1}^{M} \pi_m \mathbf{E}_m(\tau|\Gamma)$ be the expected detection delay under policy $\Gamma$.

We adopt a Bayesian approach by assigning a cost of $c$ for each observation, a cost of $s$ for each switch, where it is assumed that $s=O(c)$, and a loss of 1 for a wrong declaration by time $t=\tau$. Specifically, each time step $t<\tau$ incurs a cost $c+N_s(t)\cdot s$, where $N_s(t)$ is the number of processes which are different from the previous play\footnote{This definition of switching cost is commonly used in online learning settings (see \cite{agrawal1988asymptotically, agrawal1990multi} and subsequent studies). Nevertheless, the theoretical analysis applies to all cases where a cost $s=O(c)$ is added at each time step in which switching occurs, regardless of the exact number of processes which are different from the previous play.}. Let $\tau_{switch}\triangleq\sum_{t=1}^{\tau}N_s(t)$ be the total number of switchings at the time of stopping, and $\mathbf{E}(\tau_{switch}|\Gamma)=\sum_{m=1}^{M} \pi_m \mathbf{E}_m(\tau_{switch}|\Gamma)$ be its expected value under policy $\Gamma$. The Bayes risk under policy $\Gamma$, which balances between detection errors, sample complexity, and switching cost is defined by \cite{chen2019active}:
\beq
    R_m(\Gamma)\triangleq\alpha_m(\Gamma) + c\mathbf{E}_m(\tau|\Gamma)+ s\mathbf{E}_m(\tau_{switch}|\Gamma),
\eeq
when hypothesis $H_m$ is true. The average Bayes risk under policy $\Gamma$ is thus given by:
\beq
    \label{eq:bayes_risk}
    R(\Gamma)= \sum\limits_{m=1}^{M} \pi_m R_m(\Gamma)= \mathbf{P}_e(\Gamma) + c\mathbf{E}(\tau|\Gamma)+ s\mathbf{E}(\tau_{switch}|\Gamma).
\eeq
The objective is to find a policy $\Gamma$ that minimizes the Bayes risk $R(\Gamma)$:
\beq
\label{eq:objective_risk}
    \underset{\Gamma}{\inf} \; R(\Gamma).
\eeq
In the next section we develop the CCS policy to achieve this goal in the asymptotic limit as $c\rightarrow 0$.

\begin{figure*}[t]
	\centering \epsfig{file=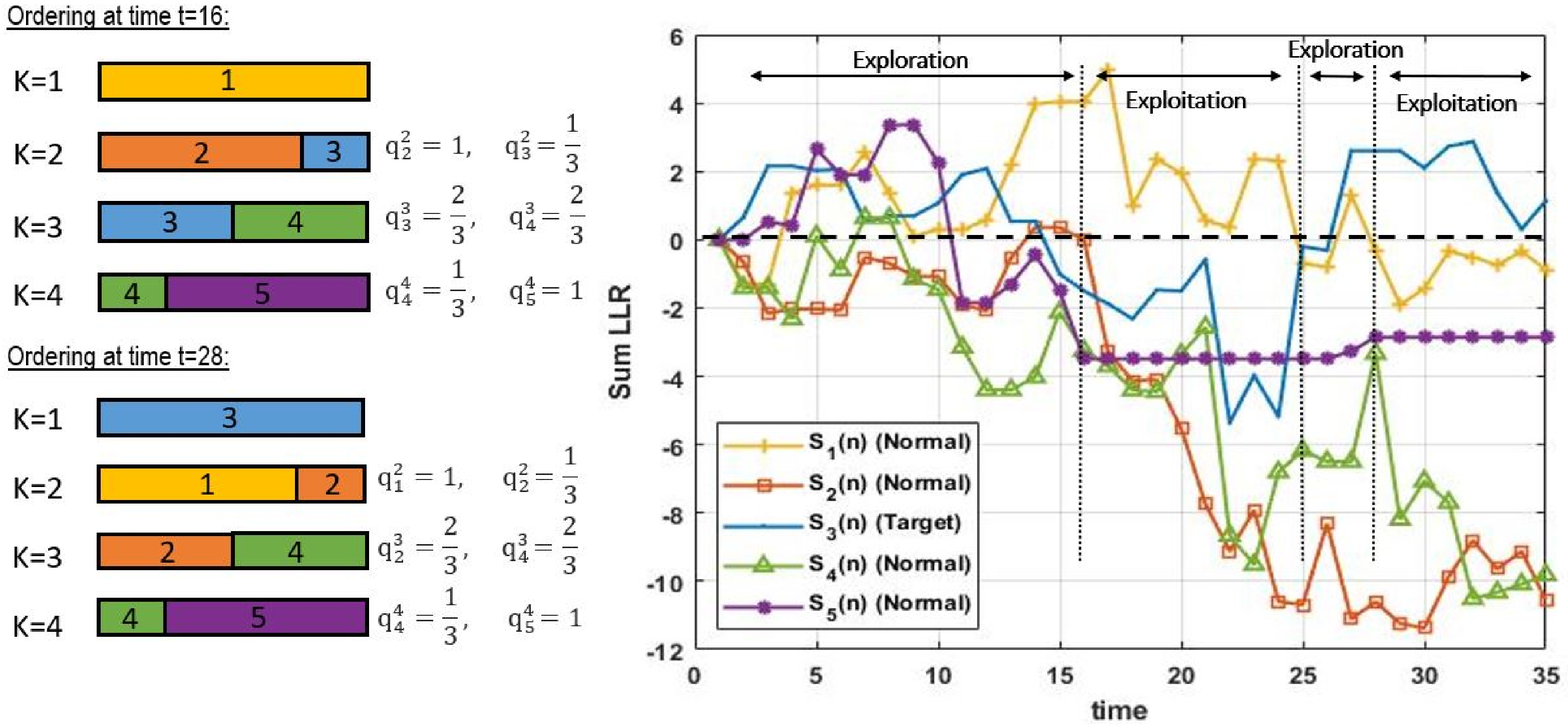,
		width=1.05\textwidth,height=0.26\textheight, center}
	\caption{\small {An illustration of the CCS policy for Case $1$ when $K=4$, $M=5$, and cell $3$ contains the target. At times $0\leq t\leq 16$, and $25\leq t\leq 28$, $N_{H_1}(n)\neq 1$, and thus exploration phases are conducted. After the first exploration phase ends, cell $1$ is suspected to be the target (mistakenly) and ordered in probing machine $k=1$. Then, an exploitation phase starts at time $t=17$. After the second exploration phase ends, cell $3$ is suspected to be the target (correctly) and ordered in probing machine $k=1$. Then, an exploitation phase starts at time $t=29$. We show in the analysis that the time spent during exploration phases is bounded. In each probing machine $k\neq 1$, two processes are ordered such that their portions are summed to $\frac{M-1}{K-1}=\frac{4}{3}$.}}
	\label{fig:ordering}
\end{figure*}

\section{The Consecutive Controlled Sensing (CCS) policy}
\label{sec:CCSpolicy}

In this section we develop a deterministic policy, dubbed Consecutive Controlled Sensing (CCS) policy, to solve (\ref{eq:objective_risk}). For ease of presentation and explanation, we first present the case of a single anomaly, where $L=1$, in Subsection \ref{ssec:CCS_development}, and then the more general case of multiple anomalies, where $L\geq 1$, in Subsection \ref{ssec:multipleTargets_development}. The asymptotic optimality of CCS is analyzed in Theorem \ref{th:asymptotic_optimality} in Subsection \ref{ssec:performance_CCS}. \vspace{0.3cm}

We start by introducing notations that we use in the development of the CCS policy. 
Let 
\beq
    \ell_m(n) \triangleq \log{\bigg(\frac{g(y_m(n))}{f(y_m(n))}\bigg)},
\eeq
and
\beq
    S_m(n) \triangleq \sum\limits_{t=1}^{n} \ell_m(t) \textbf{1}_m(t)
\eeq
be the log-likelihood ratio (LLR) and the observed sum LLR of cell $m$ at time $n$, respectively, where $\textbf{1}_m(n)$ is the indicator function on whether cell $m$ is observed at time $n$. $\textbf{1}_m(n)=1$ if cell $m$ is observed at time $n$, and $\textbf{1}_m(n)=0$ otherwise.

We define $m^{(i)} (n)$ as the index of the cell with the $i$th highest observed sum LLR at time $n$. Let
\beq
    \Delta S(n) \triangleq S_{m^{(1)} (n)}(n) - S_{m^{(2)} (n)}(n)
    \label{eq:deltaSn}
\eeq
be the difference between the highest and the second highest observed sum LLR at time $n$. 
Let $\mathcal{H}_1(n)=\left\{m\;:\; S_m(n)>0\right\}$ be the set of cells whose sum LLRs are greater than zero at time $n$ with cardinality $|\mathcal{H}_1(n)|=N_{H_1}(n)$. Let $D(g||f)$ denote the Kullback-Leibler (KL) divergence between two distributions $g$ and $f$, indicating the expected LLR between distributions $g$ and $f$, where the expectation is taken with respect to $g$, $D(g||f)=\mathbf{E}_{\sim g}[\log(g/f)]$. Similarly, the KL divergence between distributions $f$ and $g$ is the minus expected LLR between distributions $g$ and $f$, where the expectation is taken with respect to $f$, $D(f||g)=\mathbf{E}_{\sim f}[\log(f/g)]=-\mathbf{E}_{\sim f}[\log(g/f)]$. As a result, the sum LLR, generated by sampling the target, is a random walk with a positive increment with rate $D(g||f)$, while the sum LLR, generated by sampling a normal process, is a random walk with a negative increment with rate $D(f||g)$.

In the design of CCS we define two cases, depending on the order of the system size $M$ and $L\cdot (D(f||g)/D(g||f)+1)$. In Case 1, $L\cdot (D(f||g)/D(g||f)+1) \leq M$ holds, where in Case 2, $L\cdot (D(f||g)/D(g||f)+1) > M$ holds. Define
\begin{equation}
\label{eq:rate_function}
\bea{l}
    I^*(M,K,L) \vspace{2mm}\\
    \triangleq \begin{cases}
        D(g||f)+\frac{(K-L)D(f||g)}{M-L}, &\text{if $L\cdot \bigg(\frac{D(f||g)}{D(g||f)}+1 \bigg) \leq M$} \vspace{4mm} \\ 
        \frac{K \cdot D(f||g)}{M-L}, &\text{if $L\cdot \bigg(\frac{D(f||g)}{D(g||f)}+1 \bigg) > M$}
    \end{cases}
\ena
\end{equation}
which will be shown later to be the rate function of the asymptotically optimal performance. The intuition behind the rate function can be easily understood by the case of a single target ($L=1$), and noting that $D(f||g)/D(g||f)+1\leq M$ is equivalent to $D(f||g)/(M-1)\leq D(g||f)$. In Case $1$ (i.e., $D(f||g)/D(g||f)+1\leq M$), the algorithm should always decide to sample the (suspected) target since it contributes to increasing $\Delta S(n)$ with rate $D(g||f)$. Then, the remaining $K-1$ probing machines should be used to sample the rest $M-1$ (suspected) normal processes (which contributes to increasing $\Delta S(n)$ with rate $(K-1)D(f||g)/(M-1)$). This results in a total rate of $I^*(M,K,L)=D(g||f)+\frac{(K-1)D(f||g)}{M-1}$. In Case 2 (i.e., $D(f||g)/D(g||f)+1 >M$), intuitively, the algorithm should not sample the (suspected) target since sampling each normal process is preferred as it contributes to increasing $\Delta S(n)$ with rate $D(f||g)/(M-1)$. This results in a total rate of $I^*(M,K,L)=\frac{K\cdot D(f||g)}{M-1}$. It should be noted that the algorithm does not know which process is target or normal. The CCS algorithm infers the true state online and adapts to the desired strategy required to minimize the Bayes risk. In the following subsections we present the CCS algorithm in each of the two cases separately.

\subsection{Detection of a single anomaly}
\label{ssec:CCS_development}

\subsubsection{Case $1$: $D(f||g)/D(g||f)+1 \leq M$}
\label{sssec:CCS_development_case1}

The CCS policy has a structure of exploration and exploitation epochs. We describe CCS policy below for a single target with respect to time index $n$. An illustration of the algorithm is presented in Fig. \ref{fig:ordering}.
\vspace{0.2cm}
 
\noindent
\textbf{1) (Exploration phase:)} If $N_{H_1}(n)\neq 1$, then probe the cells one by one in a round-robin manner, while updating the sum LLR of each cell, and go to Step $1$ again. Otherwise, go to Step $2$.\\
For example, in Fig. \ref{fig:ordering}, exploration phases occur at times $0\leq t\leq 16$, and $25\leq t\leq 28$.\vspace{0.2cm}

\noindent
\textbf{2) (Ordering the $M$ processes based on current beliefs:)} Set the probing order before the exploitation phase starts. The processes are divided into $K$ probing machines, where process $m^{(1)} (n)$ is in probing machine $k=1$ and all other $M-1$ processes are ordered into probing machines\footnote{Note that when $K=1$, only cell $m^{(1)} (n)$ is ordered in machine $k=1$, and the total estimated sensing time is set to $-\log c/D(g||f)$. Then, go to Step $3$ and take observations until $\Delta S(n)$ is greater than $-\log c$. No suspected normal cells are probed in this case.} $k=2, ..., K$. The ordering is done by first concatenating all $M-1$ suspected normal cells one after another in an arbitrary order. The length of each process is determined by the estimated sensing time, $-\frac{(K-1)\log c}{(M-1)I^*(M,K,L)}$ for each process (i.e., a total length of $-\frac{(K-1)\log c}{I^*(M,K,L)}$). Then, the $M-1$ cells are equally divided to $K-1$ parallel machines, where the total estimated sensing time for all processes in each machine is thus\footnote{In the analysis we show that this allocation guarantees asymptotic optimality of the test.} $-\log c/I^*(M,K,L)$. Note that the actual sensing time is random and the estimated sensing time is only used to determine the allocation of processes in the probing machines based on the expected asymptotic time required to sample each process until the termination of the test. Let $\mathcal{H}^{k}(t_i)\subseteq\{1,...,M\}$ be the set of indices, indicating which cells are ordered for probing by machine $k$ during exploitation phase $t_i$. For convenience, when there is no ambiguities, we often write $\mathcal{H}^{k}$ to denote this set. For machine $k\neq 1$, note that the first and last processes might be split and allocated on two probing machines at most (e.g., processes $2 \mbox{ and } 4$ at ordering time $t=28$ in Fig. \ref{fig:ordering}). All other processes in the probing machine are not split. Denote $\mathcal{K}_{j}$ as the set of machines that probe process $j$. We define $q_j^{k}$ as the fraction of the estimated sensing time in which process $j$ is sampled in probing machine $k\neq 1$, where $q_j^{k}=0$ if $k\nin\mathcal{K}_{j}$, $q_j^{k}=1$ if $k\in\mathcal{K}_{j}$ and process $j$ is sampled by a single machine (i.e., $|\mathcal{K}_{j}|=1$), and $0<q_j^{k}<1$, $0<q_j^{k+1}=1-q_j^{k}<1$ when process $j$ is split between two successive machines $k, k+1\in\mathcal{K}_{j}$ (i.e., $|\mathcal{K}_{j}|=2$). Note that by these definitions, we get $\sum_k q_j^{k}=1$. By the ordering mechanism, for each probing machine $k$, it holds that summing $q_j^{k}$'s over all cells probed by machine $k\neq 1$ yields:
\begin{equation}
    \sum\limits_{j\in\mathcal{H}^{k}} q_j^{k} = \frac{M-1}{K-1},
\end{equation}
as $M-1$ suspected normal cells are allocated in probing machines $k=2,..,K$. 
Note that a process could be allocated in two probing machines, but cannot be observed simultaneously (we discuss the mechanism that guarantees the feasibility of the sensing process in such cases later). Go to Step $3$.\\
An illustration of the ordering procedure is presented in Fig. \ref{fig:ordering} at times $t=16$ and $t=28$. \vspace{0.2cm}

\noindent
\textbf{3) (Exploitation phase:)} Whenever $N_{H_1}(n)\neq 1$ during this step go back to Step $1$. As long as $N_{H_1}(n)= 1$, probe cells $\phi(n)$ consecutively following the order in Step $2$, and update the sum LLRs accordingly. For each process, say $j'$, which is ordered in only one of probing machines $k=2, ..., K$ (which is likely to be normal), take observations consecutively until $S_{j'}(n)$ is smaller than
\begin{equation}
    \frac{(K-1) D(f||g) \log c}{(M-1) I^*(M,K,L)}.
\end{equation}
Then, sample the next process following the order in Step $2$. If the process, say $j$, is ordered in two of probing machines $k \mbox{ and } k+1$, where $k\neq 1$ (e.g., processes $2 \mbox{ and } 4$ in ordering time $t=28$ in Fig. \ref{fig:ordering}), then start by taking the first fraction of the set of observations in machine $k+1$ as divided by the scheduler, referred to as $q_j^{k+1}$, such that $S_j(n)$ is smaller than\vspace{0.1cm}
\begin{equation}
    q_j^{k+1} \cdot \frac{(K-1) D(f||g) \log c}{(M-1) I^*(M,K,L)}.
\end{equation}
When sampling process $j$ in the second probing machine $k$ later, take observations until $S_j(n)$ is smaller than\vspace{0.1cm}
\begin{equation}
    \small{(q_j^{k} + q_j^{k+1})\cdot \frac{(K-1) D(f||g) \log c}{(M-1) I^*(M,K,L)} = \frac{(K-1) D(f||g) \log c}{(M-1) I^*(M,K,L)}}.\vspace{2mm}
\end{equation} 
For the process in probing machine $k=1$ (which is likely to be the target, for instance, process $3$ in Fig. \ref{fig:ordering} at time $t=28$), take observations consecutively until $\Delta S(n)$ is greater than $-\log c$. The completion of this step determines the stopping time of the test $\tau$. The decision rule is given by $\delta = m^{(1)}(\tau)$.

Note that a process cannot be probed simultaneously in more than one probing machine. Therefore, in a case of a conflict, an additional idle time is added to prevent this situation and guarantee the feasibility of the sensing process. This applies only to processes suspected to be normal (i.e., allocated in machines $k=2,...,K$), as only those processes are split between probing machines according to the ordering mechanism. An illustration of this scenario is presented in Fig. \ref{fig:idled_delay}.

\begin{figure}[ht]
    \centering
    \includegraphics[width=0.2\textwidth]{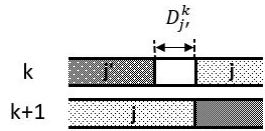}
    \caption{\small{A scenario when additional idle time is added to normal process $j'$, referred as $D_{j'}^{k}$, to guarantee that process $j$ is not sampled by both machines $k$ and $k+1$ simultaneously.}}
    \label{fig:idled_delay}
\end{figure}

\subsubsection{Case 2: $D(f||g)/D(g||f)+1 > M$}
\label{sssec:CCS_development_case2}

\begin{figure}[t!]
	\centering \epsfig{file=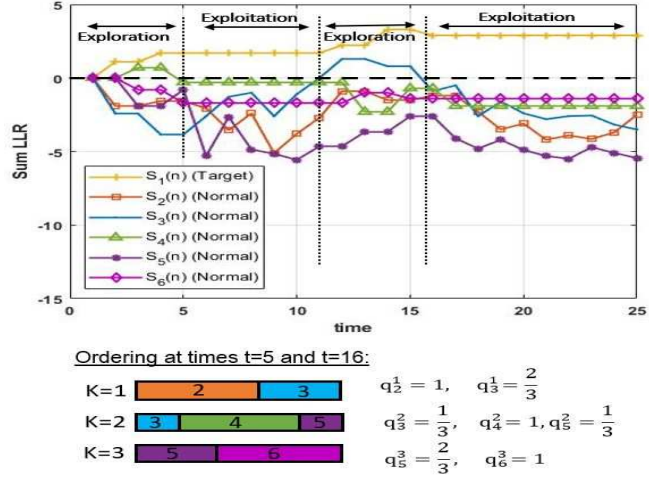,		width=0.5\textwidth,height=0.25\textheight, center}
	\caption{\small{An illustration of the CCS policy for Case $2$ when $K=3$, $M=6$, and cell $1$ contains the target. At times $0\leq t\leq 5$, and $11\leq t\leq 16$, $N_{H_1}(n)\neq 1$, and thus exploration phases are conducted. After the first exploration phase ends, cell $1$ is suspected to be the target (correctly) and not ordered to be sampled. Then, an exploitation phase starts at time $t=6$. After the second exploration phase ends, cell $1$ again is suspected to be the target and not ordered to be sampled. Then, an exploitation phase starts at time $t=17$.}}
	\label{fig:ordering_case2}
\end{figure}

The CCS policy has a structure of exploration and exploitation epochs again, where similar to Case $1$, the purpose of the exploration phases is to determine the location of the target cell. We describe CCS below with respect to time index $n$. An illustration of the algorithm is presented in Fig. \ref{fig:ordering_case2}.
In this case, only the suspected normal cells are ordered for probing by $K$ probing machines at a time, as intuitively explained earlier.
\vspace{0.2cm}
 
\noindent
\textbf{1) (Exploration phase:)} Implement the same exploration phase as in Case $1$, described in subsection \ref{sssec:CCS_development_case1}.\\
For example, in Fig. \ref{fig:ordering_case2}, exploration phases occur at times $0\leq t\leq 5$, and $11\leq t\leq 16$.\vspace{0.2cm}

\noindent 
\textbf{2) (Ordering the $M-1$ processes based on current beliefs:)} Set the probing order before the exploitation phase starts. The processes are divided into $K$ probing machines, where all suspected $M-1$ normal cells are ordered for probing by machines $k=1,...,K$ (process $m^{(1)} (n)$ is not probed by any machine). The ordering is done by first concatenating all $M-1$ suspected normal cells one after another in an arbitrary order. The length of each process is determined by the estimated sensing time, $-\frac{K\log c}{(M-1)I^*(M,K,L)}$ for each process (i.e., a total length of $-\frac{K\log c}{I^*(M,K,L)}$). Then, the $M-1$ cells are equally divided to $K$ parallel machines, where the total estimated sensing time for all processes in each machine is thus $-\log c/I^*(M,K,L)$. Since each normal cell is scheduled in probing machine $k$, we use the definition of $q_j^{k}$ for all $k$. By the ordering mechanism, for each probing machine $k$, it holds that summing $q_j^{k}$'s over all cells probed by machine $k$ yields:
\begin{equation}
    \sum\limits_{j\in\mathcal{H}^{k}} q_j^{k} = \frac{M-1}{K},
\end{equation}
as $M-1$ suspected normal cells are allocated in $K$ probing machines. Go to Step $3$.\\
An illustration of the ordering procedure is presented in Fig. \ref{fig:ordering_case2} at times $t=5$ and $t=16$. \vspace{0.2cm}

\noindent
\textbf{3) (Exploitation phase:)} Whenever $N_{H_1}(n)\neq 1$ during this step go back to Step $1$. As long as $N_{H_1}(n)= 1$, probe cells $\phi(n)$ consecutively following the order in Step $2$, and update the sum LLRs accordingly. For each process, say $j'$, which is ordered in only one of probing machines $k=1, ..., K$, take observations consecutively until $S_{j'}(n)$ is smaller than
\begin{equation}
    \frac{K\cdot D(f||g) \log c}{(M-1) I^*(M,K,L)}.
\end{equation}
Then, sample the next process following the order in Step $2$. If the process, say $j$, is ordered in two of probing machines $k \mbox{ and } k+1$ (e.g., processes $3 \mbox{ and } 5$ in ordering time $t=5$ in Fig. \ref{fig:ordering_case2}), then start by taking the first fraction of the set of observations in machine $k+1$ as divided by the scheduler, referred to as $q_j^{k+1}$, such that $S_j(n)$ is smaller than\vspace{0.1cm}
\begin{equation}
    q_j^{k+1} \cdot \frac{K \cdot D(f||g) \log c}{(M-1) I^*(M,K,L)}.
\end{equation}
When sampling process $j$ in the second probing machine $k$ later, take observations until $S_j(n)$ is smaller than\vspace{0.1cm}
\begin{equation}
    \small{(q_j^{k} + q_j^{k+1})\cdot \frac{K\cdot D(f||g) \log c}{(M-1) I^*(M,K,L)} = \frac{K\cdot D(f||g) \log c}{(M-1) I^*(M,K,L)}}.
\end{equation}
Continue to take observations consecutively until $\Delta S(n)$ is greater than $-\log c$. Note that the additional idle times are added only to processes suspected to be normal (i.e., allocated in all $K$ machines) to guarantee the feasibility of the sensing process. The completion of this step determines the stopping time of the test $\tau$. The decision rule is given by $\delta = m^{(1)}(\tau)$. \vspace{2mm}

The intuition of the design of CCS policy is to exploit observations from the target and the normal processes to infer the location of the target. Intuitively speaking, when the sum LLRs of normal processes decrease, it is more likely that the process with the positive sum LLR is the target. We show in the analysis that the judicious design of the CCS policy achieves asymptotically optimal performance, with a bounded accumulated switching cost.

\begin{figure*}[t]
	\centering \epsfig{file=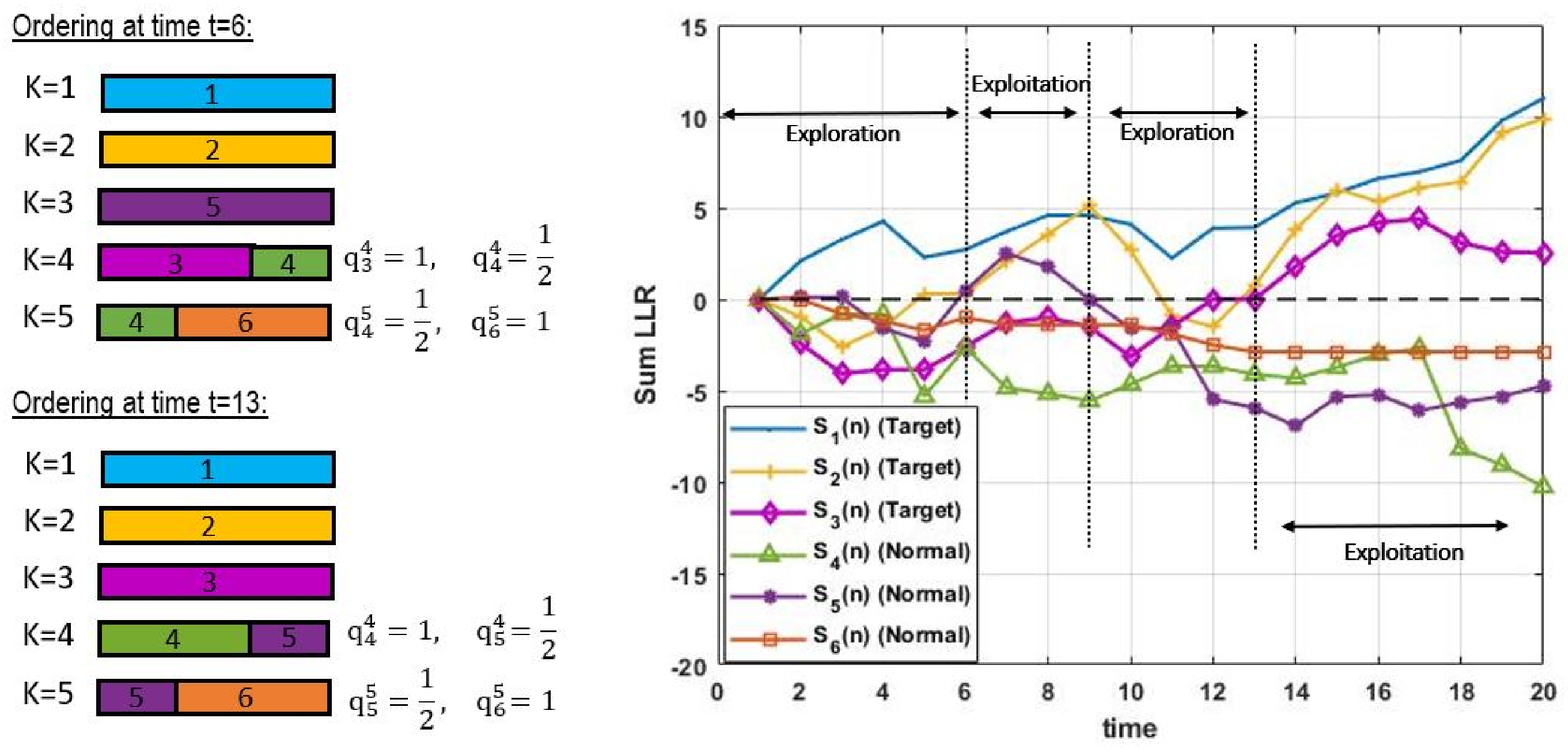,
		width=1.05\textwidth,height=0.26\textheight, center}
	\caption{\small {An illustration of the CCS policy when $L=3, K=5$, $M=6$, and cells $1,2 \mbox{ and } 3$ contain targets. At times $0\leq t\leq 6$ and $9\leq t\leq 13$, $N_{H_1}(n)\neq L$, and thus exploration phases are conducted. After the first exploration phase ends, cells $1,2 \mbox{ and 5}$ are suspected to be the targets (mistakenly) and ordered in probing machines $k=1,2,3$. Then, an exploitation phase starts at time $t=7$. After the second exploration phase ends, cells $1,2 \mbox{ and } 3$ are suspected to be the targets (correctly) and ordered in probing machines $k=1,2,3$. Then, an exploitation phase starts. In each probing machine $k=4,5$, two processes are ordered such that their portions are summed to $\frac{M-L}{K-L}=\frac{3}{2}$.}}
	\label{fig:ordering_multipleTar}
\end{figure*}

\subsection{Extension to Detection of Multiple Anomalies}
\label{ssec:multipleTargets_development}

In this subsection we extend the results reported in the previous subsection and analyze the case of $L\geq 1$ targets, where $L$ is known, and $K\geq L$.
Similar to the single-target case, the rate function in \eqref{eq:rate_function} can be intuitively explained as follows. In Case 1 (i.e., $D(f||g)/(M-L)\leq D(g||f)/L$), the algorithm should always sample the $L$ (suspected) targets, and the rest $K-L$ probing machines should be used to sample the rest $M-L$ (suspected) normal processes. This results in a total rate of $I^*(M,K,L)=D(g||f)+\frac{(K-L)D(f||g)}{M-L}$. In Case 2 (i.e., $D(f||g)/(M-L)>D(g||f)/L$), intuitively, the algorithm should not sample the (suspected) targets since sampling each normal process is preferred as it contributes to increasing the difference between the $L$th highest and the $(L+1)$th highest observed sum LLR with rate $D(f||g)/(M-L)$. This results in a total rate of $I^*(M,K,L)=\frac{K D(f||g)}{M-L}$. Next, we focus on presenting CCS for the case of a large-scale systems, where $M$ is sufficiently large, so that $L\cdot (D(f||g) /D(g||f) + 1) \leq M$ holds (i.e., Case 1). The design for Case 2 can be easily extended as discussed for the single-target case in Subsection \ref{sssec:CCS_development_case2}.
Let 
\beq
    \Delta_L S(n) \triangleq S_{m^{(L)} (n)}(n) - S_{m^{(L+1)} (n)}(n),
\eeq
be the difference between the $L$th highest and the $(L+1)$th highest observed sum LLR at time $n$. 
The decision rule declares a set of $L$ target locations $\delta_L = ( m^{(1)}(\tau), m^{(2)}(\tau), ..., m^{(L)}(\tau))$, where $\tau = \inf \{n:\Delta_L S(n)\geq -\log(c) \}$ is the stopping rule. The number of hypotheses in this case is $\big(\begin{smallmatrix}
  M\\
  L
\end{smallmatrix}\big)$. Let $\mathcal{D}\subset \{1,2,...,M\}$ denote a subset of $L$ cells with cardinality $|\mathcal{D}|=L$ indicating the location of all target cells, where the set of hypotheses is defined by $H_{\mathcal{D}}$. \vspace{0.5cm}

The CCS policy has a structure of exploration and exploitation epochs. We describe the policy below for multiple targets with respect to time index $n$. An illustration of the algorithm for the multi-target case is presented in Fig. \ref{fig:ordering_multipleTar}. \vspace{0.2cm}

\noindent
\textbf{1) (Exploration phase:)} If $N_{H_1}(n)\neq L$, then probe the cells one by one in a round-robin manner, while updating the sum LLR of each cell, and go to Step $1$ again. Otherwise, go to Step $2$.\vspace{0.2cm}

\noindent
\textbf{2) (Ordering the $M$ processes based on current beliefs:)} Set the probing order before the exploitation phase starts. The processes are divided into $K$ probing machines, where processes ${m^{(1)} (n),m^{(2)} (n), ..., m^{(L)} (n)}$ are in probing machines $k=1, ..., L$ correspondingly (note that $K\geq L$), and all other $M-L$ processes are ordered into probing machines\footnote{Note that when $K=L$, only cells ${m^{(1)} (n),m^{(2)} (n), ..., m^{(L)} (n)}$ are ordered in machines $k=1, ...,L$, and the total estimated sensing time is set to $-\log c/D(g||f)$. Then, go to Step $3$ and sample until $\Delta_L S(n)$ is greater than $-\log c$. No suspected normal cells are probed in this case.} $k=L+1, ..., K$. The ordering is done by first concatenating all $M-L$ suspected normal cells one after another in an arbitrary order. The length of each process is determined by the estimated sensing time, $-\frac{(K-L)\log c}{(M-L)I^*(M,K,L)}$ for each process (i.e., a total length of $-\frac{(K-L)\log c}{I^*(M,K,L)}$). Then, the $M-L$ cells are equally divided to $K-L$ parallel machines, where the total estimated sensing time for all processes in each machine is thus $-\log c/I^*(M,K,L)$. The ordering mechanism is similar to the single target case. A process could be allocated in more than one probing machine, but cannot be observed simultaneously. Go to Step $3$.\vspace{0.2cm}

\noindent
\textbf{3) (Exploitation phase:)} Whenever $N_{H_1}(n)\neq L$ during this step go back to Step $1$. As long as $N_{H_1}(n)= L$, probe cells $\phi(n)$ consecutively following the order in Step $2$, and update the sum LLRs accordingly. For each process, say $j'$, which is ordered in one of probing machines $k=L+1, ..., K$ (which is likely to be normal), take observations consecutively until $S_{j'}(n)$ is smaller than
\begin{equation}
    \frac{(K-L) D(f||g) \log c}{(M-L) I^*(M,K,L)}.
\end{equation}
Then, sample the next process following the order in Step $2$. If the process, say $j$, is ordered in two of probing machines $k \mbox{ and } k+1$, where $k\neq 1,...,L$, then start by taking the first relative portion of the set of observations in machine $k+1$ as divided by the scheduler, referred to as $q_j^{k+1}$, such that $S_j(n)$ is smaller than\vspace{0.1cm}
\begin{equation}
    q_j^{k+1} \cdot \frac{(K-L) D(f||g) \log c}{(M-L) I^*(M,K,L)}.
\end{equation}
When sampling process $j$ in the second probing machine $k$ later, take observations until $S_j(n)$ is smaller than\vspace{0.1cm}
\begin{equation}
    \small{(q_j^{k} + q_j^{k+1})\cdot \frac{(K-L) D(f||g) \log c}{(M-L) I^*(M,K,L)} = \frac{(K-L) D(f||g) \log c}{(M-L) I^*(M,K,L)}}.
\end{equation}
For the processes in probing machines $k=1, ...,L$ (which are likely to be the targets), take observations consecutively until $\Delta_L S(n)$ is greater than $-\log c$. Note that the additional idle times are added only to processes which are suspected to be normal (i.e., allocated in machines $k=L+1,...,K$) to guarantee the feasibility of the sensing process. The completion of this step determines the stopping time of the test $\tau$. The decision rule is given by $\delta = ( m^{(1)}(\tau), m^{(2)}(\tau), ..., m^{(L)}(\tau))$.\vspace{0.1cm}

\subsection{Performance Analysis}
\label{ssec:performance_CCS}

The following theorem establishes the theoretical performance of the CCS algorithm, and particularly its asymptotic optimality in terms of minimizing the Bayes risk as $c \rightarrow 0$ (i.e., as time increases). \vspace{2mm}

\begin{theorem}
\label{th:asymptotic_optimality}
Consider the case of $L\geq 1$ anomalies. Let $R^*$ and $R(\Gamma)$ be the Bayes risks under the CCS policy and any other policy $\Gamma$, respectively. Then, the following statements hold:
\begin{enumerate}
  \item (\emph{Finite sample error bound:}) The detection error probability under CCS is upper bounded by $A\cdot c$ for all $c$, where $A$ is a constant independent of $c$.\vspace{0.2cm}
  \item (\emph{Bounded expected number of switchings:}) The expected number of switchings during the running time of CCS is $O(1)$ with respect to $c$.\vspace{0.2cm}
  \item (\emph{Asymptotic optimality:}) The Bayes risk satisfies\footnote{The notation $f \sim g$ as $c \rightarrow 0$ refers to $\underset{c \rightarrow 0}\lim f/g =1$.}:
  \beq
  R^* \sim \frac{-c\;\log c}{I^*(M,K,L)} \sim \underset{\Gamma}{\inf}\; R(\Gamma) \mbox{  as } c \rightarrow 0.
  \eeq
  \vspace{0.1cm}
\end{enumerate}
\end{theorem}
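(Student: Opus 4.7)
The proof has three parts: a finite-sample error bound, a bounded-switching claim, and the asymptotic identity. For Part 1 the plan is a change-of-measure bound. On the event $\{\delta=\mathcal{D}'\}$ under $H_{\mathcal{D}}$, every cell $m\in\mathcal{D}\setminus\mathcal{D}'$ is ranked below position $L$ at time $\tau$ and every cell $m'\in\mathcal{D}'\setminus\mathcal{D}$ at position $L$ or above, so the stopping rule $\Delta_L S(\tau)\geq-\log c$ forces $S_{m_i}(\tau)-S_{m'_i}(\tau)\leq \log c$ for any pairing of $\mathcal{D}\triangle\mathcal{D}'$ into pairs $(m_i,m'_i)$. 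Since observations of cells outside $\mathcal{D}\triangle\mathcal{D}'$ have identical laws under $H_{\mathcal{D}}$ and $H_{\mathcal{D}'}$, the Radon--Nikodym derivative restricted to the relevant samples equals $\prod_i\exp(S_{m_i}(\tau)-S_{m'_i}(\tau))\leq c^{|\mathcal{D}\setminus\mathcal{D}'|}$, and a standard change-of-measure inequality gives $\mathbf{P}_\mathcal{D}(\delta=\mathcal{D}')\leq c^{|\mathcal{D}\setminus\mathcal{D}'|}$. Summing over $\mathcal{D}'$ and over $\mathcal{D}$ yields $\mathbf{P}_e\leq Ac$ with $A$ depending only on $M$ and $L$.

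For Part 2, switches occur at the at-most-$M$ consecutive-segment transitions inside each exploitation epoch and throughout every (round-robin) exploration epoch, so it suffices to bound the expected total time spent in exploration and the expected number of exploration-to-exploitation alternations. Both reduce to random-walk estimates: under $H_\mathcal{D}$ each target sum-LLR is a random walk with positive drift $D(g\|f)$ and each normal with negative drift $D(f\|g)$, so once the LLRs have separated, the event $\{N_{H_1}(n)\neq L\}$ is exponentially rare; Wald's identity and renewal-type overshoot bounds then control both the count of alternations and the duration of each exploration epoch. This gives $\mathbf{E}[\tau_{\mathrm{switch}}]=O(1)$ uniformly in $c$.

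For the achievability half of Part 3, I would analyze $\mathbf{E}[\tau]$ inside a correctly-aligned exploitation epoch, i.e.\ one in which the suspected set equals $\mathcal{D}$. By the scheduler's design in Case 1, each of the $L$ targets is sampled every slot so its LLR drifts at $D(g\|f)$, while each of the $M-L$ normals is sampled with time-averaged rate $(K-L)/(M-L)$, making the top-normal LLR drift at $-(K-L)D(f\|g)/(M-L)$; hence $\Delta_L S(n)$ drifts at exactly $I^*(M,K,L)$, with Case 2 analogous but with the targets excluded. Wald's identity on the hitting time of $-\log c$ gives exploitation length $(1+o(1))(-\log c)/I^*(M,K,L)$. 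Combined with the $O(1)$ exploration time from Part 2 and the error/switch bounds, and using $s=O(c)$, one obtains $R^*\leq (1+o(1))(-c\log c)/I^*(M,K,L)$.

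The main obstacle is the converse $\inf_\Gamma R(\Gamma)\geq (1+o(1))(-c\log c)/I^*(M,K,L)$. The plan is to adapt the Chernoff-type information-theoretic lower bound for active hypothesis testing: for any policy with $\mathbf{P}_e\leq c$, one has $\mathbf{E}_\mathcal{D}[\tau]\geq (1+o(1))(-\log c)/\max_{\pi}\min_{\mathcal{D}'\neq\mathcal{D}} D(\mathbf{P}_\mathcal{D}^\pi\|\mathbf{P}_{\mathcal{D}'}^\pi)$, where $\pi$ ranges over distributions on $K$-subsets of $\{1,\dots,M\}$. Solving this max--min recovers \eqref{eq:rate_function}: in Case 1 the optimum is to include each suspected target in every sampled subset and spread the residual $K-L$ sampling mass uniformly over the $M-L$ suspected normals, while in Case 2 the targets are dropped because the marginal KL gain $D(f\|g)/(M-L)$ from sampling a normal exceeds $D(g\|f)/L$. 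The delicate step, which I expect to be the hardest, is verifying that this closed-form solution actually solves the max--min over \emph{all} $K$-subset distributions and not merely their cell-marginals; I would handle it by writing the problem as a linear program in the subset distribution and closing the gap via KKT conditions or by dualizing the $K$-out-of-$M$ capacity constraint. With the converse established, $R(\Gamma)\geq c\,\mathbf{E}[\tau]\geq (1+o(1))(-c\log c)/I^*(M,K,L)$ completes the proof.
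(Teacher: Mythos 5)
Your overall plan is sound and Parts 1 and 2 track the paper closely: the paper's Lemma \ref{lem:error_upperBound} is exactly your change-of-measure bound (worked out for $L=1$, with the multi-target pairing argument left as "similar"), and the paper's Lemma \ref{lem:tau_m1} is your random-walk/Chernoff control of the exploration time, with the switching count bounded by $\tau_{m,1}+\sum_{j\neq m}N_{j,1}$. The two places where you genuinely diverge are in Part 3. First, for achievability the paper does \emph{not} run a drift/Wald argument on $\Delta_L S(n)$; it bounds the stopping time by the maximum of the $K$ per-machine completion times, each decomposed into exploration time, exploitation workload $\bar N^k$, and the idle delays $D_{j'}^k$ caused by split processes (Lemmas \ref{lem:barTau_Hk}--\ref{lem:expectedDetectionTime}), and shows each has an exponentially decaying tail beyond $-(1+\epsilon)\log c/I^*(M,K,L)$. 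This matters: $\Delta_L S(n)$ is a difference of order statistics of sums, not a random walk, so Wald's identity does not apply to it directly, and under consecutive (non-interleaved) sampling the identity of $m^{(L+1)}(n)$ keeps changing; moreover the test cannot stop before the slowest machine finishes its queue, including the conflict-induced idle time, which your sketch does not account for. These are fixable details, but the per-machine decomposition is the mechanism that makes the estimate rigorous. Second, for the converse you propose to re-derive the Chernoff-type lower bound and solve the max--min over $K$-subset experiment distributions; the paper simply imports the lower bound $R(\Gamma)\geq(1+o(1))(-c\log c)/I^*(M,K,L)$ from \cite{cohen2015active} (proved there for $s=0$) and observes that adding a nonnegative switching cost can only increase the risk. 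Your route is self-contained and correctly identifies the delicate step (that the optimal subset distribution is determined by its cell marginals, with targets included or excluded according to the Case 1/Case 2 threshold), but it is substantially more work than the paper requires.
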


\begin{proof}
\label{th:asymptotic_optimality_proof}
A detailed proof can be found in the Appendix. We provide here a sketch of the proof. In the analysis, we start by showing that the error probability under CCS is upper bounded by $A\cdot c$ and the total number of switchings is bounded. Then, we show that the Bayes risk $R^*$ under the CCS policy approaches $-c\log c/ I^*(M,K,L)$ as $c$ approaches zero, which is also shown to be the asymptotic lower bound on the Bayes risk. Specifically, the asymptotic optimality of $R^*$ is established based on Lemma \ref{lem:expectedDetectionTime}, showing that the asymptotic expected detection time approaches $-\log c/ I^*(M,K,L)$, while the number of switchings under CCS is bounded following Lemma \ref{lem:tau_m1}. The asymptotic analysis is based on upper bounding the stopping time of CCS (i.e., the detection time) which dominates the order of the Bayes risk. The detection time is determined by the slowest probing machine out of all $K$ machines. The detection time in each machine consists of the total exploration times (shown in Lemma \ref{lem:tau_m1}), exploitation times (shown in Lemma \ref{lem:barTau_Hk}), and the additional delays applied to guarantee that cells are not sampled by two successive machines simultaneously (shown in Lemma \ref{lem:tauj_delay}). Combining the above yields that the expected detection time is upper bounded by $-\log c/I^*(M,K,L)$ asymptotically as $c\rightarrow 0$ (i.e., contributes a cost $-c\log c/I^*(M,K,L)$), as desired.
\end{proof} \vspace{2mm}

We point out that a bounded number of switchings under CCS is of particular significance. It contrasts sharply the linear switching order (which increases with $-\log c$) as commonly seen in active search strategies.

\section{Simulation Results}
\label{sec:simulation}

In this section, we present experimental results to demonstrate the performance of the proposed CCS algorithm in the finite regime. We compared CCS with the classic Chernoff test \cite{chernoff1959sequential}, the DGF policy \cite{cohen2015active}, which was shown to achieve strong performance for tackling the anomaly detection setting considered here with $s=0$ (i.e., when there is no cost for switching), and the Sluggish Procedure A policy (referred to as the Sluggish policy) \cite{vaidhiyan2017neural}, which was designed to tackle dynamic search with multiple plays and switching cost, as considered here. We simulated the following setting with $10^5$ random trials. The observations followed Rayleigh distribution $f \sim \mbox{Rayleigh}(\sigma_f) \mbox{ or } g \sim \mbox{Rayleigh}(\sigma_g)$, depending on whether the single target is absent or present, respectively. The a priori probability of the target being located in cell $m$ was set to $\pi_m = 1/M$ for all $1 \leq m \leq M$. It can be verified that the KL divergence of two Rayleigh distributions $f$ and $g$ is given by: $D(f||g) = 2\log(\frac{\sigma_g}{\sigma_f})+ \frac{\sigma_f^2 - \sigma_g^2}{\sigma_g^2}$. Let $ R(\Gamma)$ be the Bayes risk under policy $\Gamma$ and $R_{LB}= \frac{-c\log c}{I^*(M,K,L)}$ be the asymptotic lower bound on the Bayes risk. To evaluate the performance of the algorithms in terms of the achievable Bayes risk, we define the relative loss with respect to the asymptotic lower bound of the Bayes risk under policy $\Gamma$ by:
\beq
    L(\Gamma) \triangleq \frac{R(\Gamma) - R_{LB}}{R_{LB}}.
    \label{eq:relative loss}
\eeq
\begin{figure}
\begin{subfigure}[t]{.47\linewidth}
    \centering
    {\includegraphics[width=\textwidth]{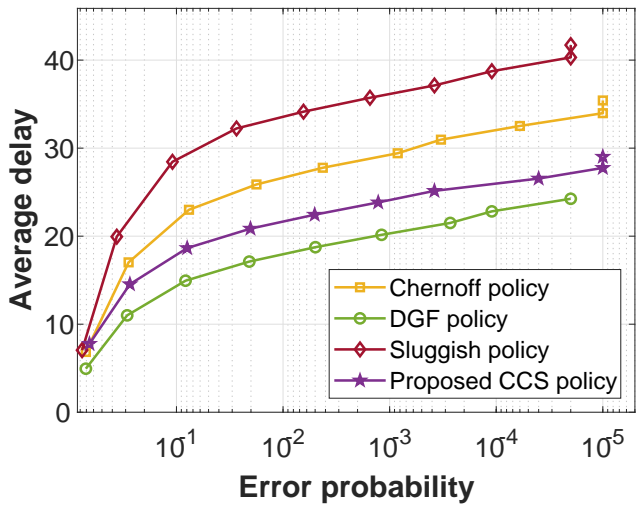}}
    \caption{Average number of observations}
    \label{fig:sim_errK10_a}
\end{subfigure} \hspace{1cm}\vspace{4mm}%
\begin{subfigure}[t]{.47\linewidth}
    \centering
    {\includegraphics[width=\textwidth]{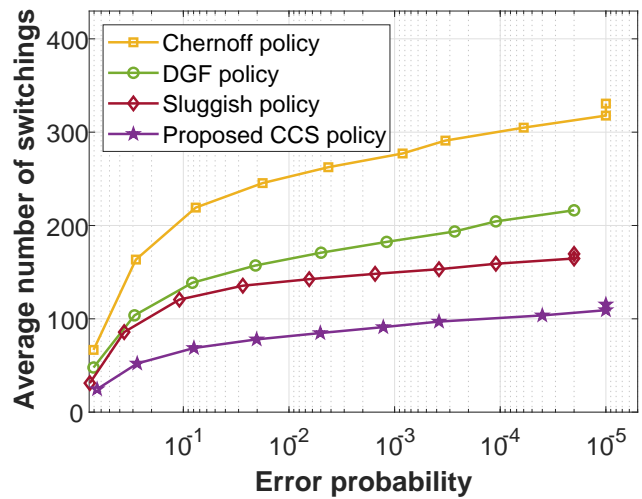}}
    \caption{Average number of switchings}
    \label{fig:sim_errK10_b}
\end{subfigure}\\[1ex]
\begin{subfigure}[t]{.47\linewidth}
    \centering
    {\includegraphics[width=\textwidth]{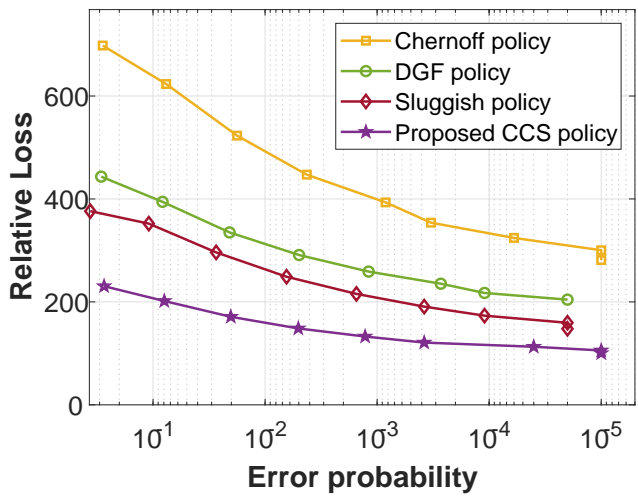}}
    \caption{Relative loss with switching cost, $s=5c$}
    \label{fig:sim_errK10_c}
\end{subfigure}\hspace{1cm}\vspace{4mm}%
\begin{subfigure}[t]{.47\linewidth}
    \centering
    {\includegraphics[width=\textwidth]{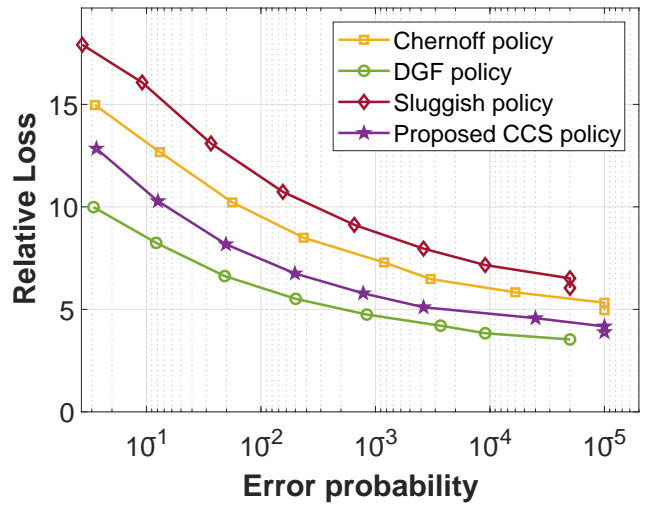}}
    \caption{Relative loss without switching cost, $s=0$}
    \label{fig:sim_errK10_d}
\end{subfigure}
\caption{\small{Performance comparison as a function of the error probability. Simulation parameters: $M=100, K=10, \sigma_f =1, \sigma_g =2$.}}
\end{figure}

We start by simulating a system with the following parameters: $M=100, K=10$, $s=5c$, $\sigma_f=1, \sigma_g = 2$. We obtain $D(g||f) \approx 1.61, D(f||g) \approx 0.64$, so that $D(f||g)/D(g||f)+1 \leq M$ holds. During the exploitation phases, CCS samples the process with the highest sum LLR, $m^{(1)} (n)$, at each given time in probing machine $k=1$, whereas all other processes are sampled consecutively in the rest $(K-1)$ probing machines as dictated by CCS sampling rule (related to the description in Fig. \ref{fig:ordering}). By contrast, the DGF policy samples, at each given time, the $K$ processes with the highest sum LLRs, $m^{(1)}(n), m^{(2)}(n), ..., m^{(K)}(n)$. The Chernoff test applies a randomized strategy that samples the process with the highest sum LLR, $m^{(1)}(n)$, while the other $(K-1)$ processes are sampled randomly (uniformly). The Sluggish policy modifies the Chernoff test, by applying the Chernoff rule or taking the same action as in the previous time step (to avoid switching), depending on a Bernoulli random variable with a tuned parameter used to reduce the total number of switchings. We present the performance of the algorithms in Figs. \ref{fig:sim_errK10_a}, \ref{fig:sim_errK10_b}, \ref{fig:sim_errK10_c}, \ref{fig:sim_errK10_d}. The performance is presented as a function of the error probability (decreasing $c$ decreases the error). It is shown in Fig. \ref{fig:sim_errK10_a} that DGF achieves the best average delay, but CCS almost achieves DGF for small errors. CCS significantly outperforms Chernoff and Sluggish policies for all errors. The significant performance gain of CCS over DGF in terms of average number of switchings can be observed in Fig. \ref{fig:sim_errK10_b}. As expected, the Sluggish policy achieves a lower number of switchings as compared to Chernoff and DGF policies. The relative loss with respect to the Bayes risk (including switching cost), as defined in (\ref{eq:relative loss}), is evaluated in Fig. \ref{fig:sim_errK10_c}. It can be seen that the Chernoff policy performs the worst, and that CCS significantly outperforms all other methods. Finally, we present the relative loss without switching cost as a benchmark for performance in Fig. \ref{fig:sim_errK10_d}. In this setting, the DGF performs the best, while the CCS performs better than the Chernoff policy and Sluggish policy. Nevertheless, CCS almost achieves DGF for small errors even in this setting.
\begin{figure}
\begin{subfigure}{.47\linewidth}
    \centering
    {\includegraphics[width=\textwidth]{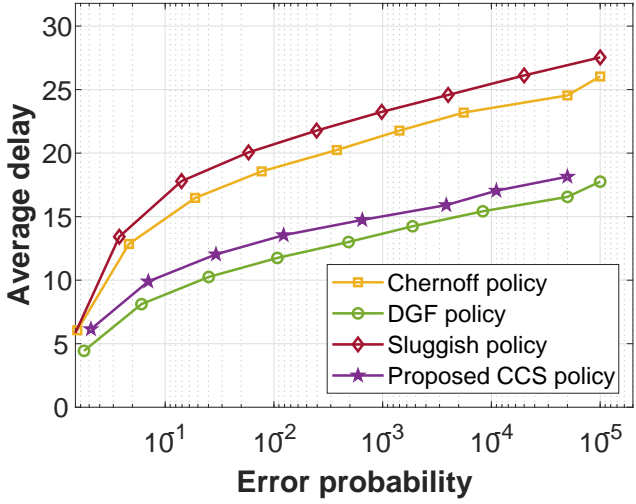}}
    \caption{Average number of observations}
    \label{fig:sim_errK50_a}
\end{subfigure}\hspace{1cm}\vspace{4mm}
\begin{subfigure}{.47\linewidth}
    \centering
    {\includegraphics[width=\textwidth]{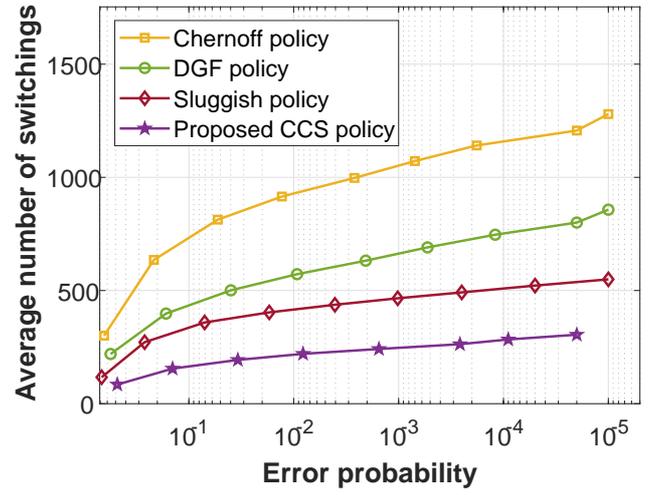}}
    \caption{Average number of switchings}
    \label{fig:sim_errK50_b}
\end{subfigure}\\[1ex]
\begin{subfigure}{.47\linewidth}
    \centering
    {\includegraphics[width=\textwidth]{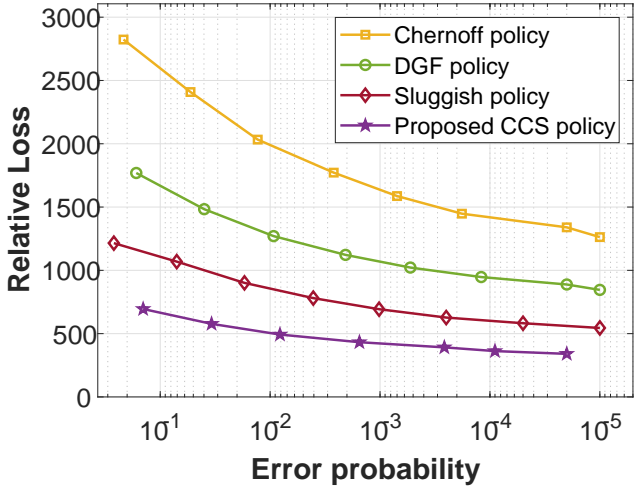}}
    \caption{Relative loss with switching cost, $s=5c$}
    \label{fig:sim_errK50_c}
\end{subfigure}\hspace{1cm}\vspace{4mm}
\begin{subfigure}{.47\linewidth}
    \centering
    {\includegraphics[width=\textwidth]{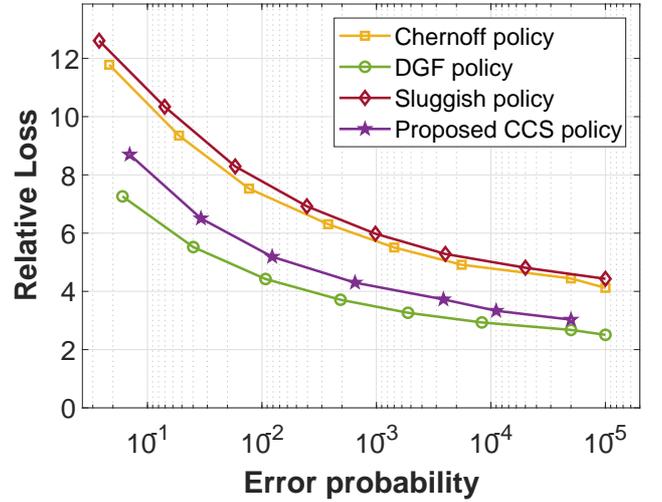}}
    \caption{Relative loss without switching cost, $s=0$}
    \label{fig:sim_errK50_d}
\end{subfigure}
\caption{\small{Performance comparison as a function of the error probability. Simulation parameters: $M=200, K=50, \sigma_f = 4, \sigma_g = 8$.}}
\end{figure}

Next, we increased the system size and set $M=200, K=50$, and set $s = 5c, \sigma_f=4, \sigma_g = 8$. We present the performance of the algorithms in Figs. \ref{fig:sim_errK50_a}, \ref{fig:sim_errK50_b}, \ref{fig:sim_errK50_c}, \ref{fig:sim_errK50_d}. The performance is presented as a function of the error probability (decreasing $c$ decreases the error).
It is shown in Fig. \ref{fig:sim_errK50_a} that DGF and CCS achieve the best average delay (DGF slightly outperforms CCS for large errors), whereas CCS significantly outperforms both Chernoff and Sluggish policies again for all errors. The significant performance gain of CCS over DGF in terms of average number of switchings can be observed again in Fig. \ref{fig:sim_errK50_b}. The relative loss with respect to the Bayes risk, as defined in (\ref{eq:relative loss}), is evaluated in Fig. \ref{fig:sim_errK50_c}. It can be seen that CCS outperforms the other methods including the Sluggish policy. Finally, we present the relative loss without switching cost as a benchmark for performance in Fig. \ref{fig:sim_errK50_d}. As expected, DGF performs the best in this case again. Nevertheless, it only slightly outperforms CCS for small errors. CCS still presents strong performance and outperforms the Sluggish policy and Chernoff policy in this setting. Finally, we ran this setting for a wide value range of the system size $M$ and the monitoring capacity $K$. We present in Figs. \ref{fig:sim_M}, \ref{fig:sim_K} the relative loss as a function of $M$ and $K$, respectively. It can be seen that CCS achieves a significant performance gain over all other methods for all values of $M, K$.

\begin{figure}[h!]
\centering
\begin{subfigure}{.47\columnwidth}
    \centering
    {\includegraphics[width=\linewidth]{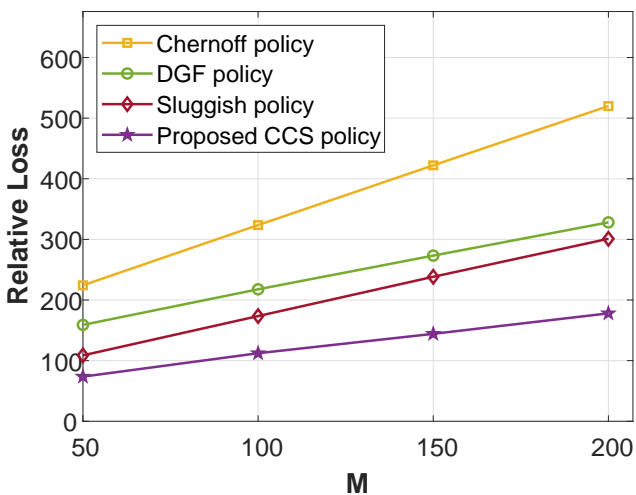}}
    \caption{Performance comparison as a function of the system size $M$, when $K=10$}
    \label{fig:sim_M}
\end{subfigure}
\hfill
\begin{subfigure}{.47\columnwidth}
    \centering
    {\includegraphics[width=\linewidth]{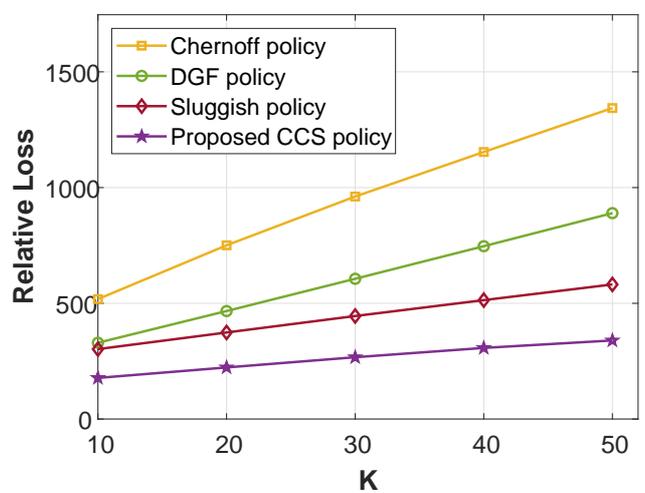}}
    \caption{Performance comparison as a function of the monitoring capacity $K$, when $M=200$}
    \label{fig:sim_K}
\end{subfigure}
\caption{\small{Relative loss with switching cost as a function of the system size and the monitoring capacity.}}
\end{figure}

The results reported above can be explained as follows. While the Chernoff test was shown to be asymptotically optimal in terms of minimizing the Bayes risk without switching cost \cite{chernoff1959sequential}, it suffers from performance degradation in the finite regime as compared to a deterministic strategy as DGF. This can be shown in Figs. \ref{fig:sim_errK10_a}, \ref{fig:sim_errK50_a}, where both DGF and CCS outperforms the Chernoff test. The Sluggish policy modifies the Chernoff test by taking sub-optimal actions from time to time to reduce the switching cost. This decreases performance in terms of the detection delay as can be seen in Figs. \ref{fig:sim_errK10_a}, \ref{fig:sim_errK50_a}. However, it improves the performance in terms of switching cost and total Bayes risk, as can be seen in Figs. \ref{fig:sim_errK10_b}, \ref{fig:sim_errK10_c}, \ref{fig:sim_errK50_b}, \ref{fig:sim_errK50_c}. The DGF policy exploits the structure of the anomaly detection problem to apply a deterministic sampling which minimizes the detection delay. Specifically, it samples, at each given time, the $K$ processes with the highest sum LLRs, $m^{(1)}(n), m^{(2)}(n), ..., m^{(K)}(n)$. This was shown to be asymptotically optimal in terms of minimizing the Bayes risk without switching cost \cite{cohen2015active}. Indeed, it achieves the best detection delay as can be seen in Figs. \ref{fig:sim_errK10_a}, \ref{fig:sim_errK10_d}, \ref{fig:sim_errK50_a}, \ref{fig:sim_errK50_d}. However, since the order of the $K$ processes with the highest sum LLRs changes frequently, DGF results in frequent switchings that significantly degrade the performance in terms of switching cost and consequently the total Bayes risk, as can be seen in Figs. \ref{fig:sim_errK10_b}, \ref{fig:sim_errK10_c}, \ref{fig:sim_errK50_b}, \ref{fig:sim_errK50_c}. The proposed CCS policy exploits the structure of the anomaly detection problem as well, but applies a consecutive sampling to reduce the switching cost. The structure of the consecutive sampling slightly degrades the performance in terms of the detection delay as compared to DGF as can be seen in Figs. \ref{fig:sim_errK10_a}, \ref{fig:sim_errK10_d}, \ref{fig:sim_errK50_a}, \ref{fig:sim_errK50_d}, but significantly improves the performance in terms of switching cost and the total Bayes risk, as can be seen in Figs. \ref{fig:sim_errK10_b}, \ref{fig:sim_errK10_c}, \ref{fig:sim_errK50_b}, \ref{fig:sim_errK50_c}. The results are further validated in Figs. \ref{fig:sim_M}, \ref{fig:sim_K} for various values of $M$ and $K$.
These results demonstrate the importance of using the CCS policy in anomaly detection problems when switching incurs cost.

\section{Conclusions}

The problem of searching for anomalies among multiple processes was studied, where only a subset of the cells can be observed at a time. The objective is to minimize the expected cost due to switching across processes, delay, and detection errors. We proposed the CCS policy to achieve this goal. Its structure allows to sense processes consecutively in exploitation phases and reduce the switching cost, while the exploration time is shown to be bounded. We presented the asymptotic optimality of the proposed policy theoretically, with a bounded accumulated switching cost. In addition, simulation results demonstrated strong performance of CCS in the finite regime as well.

Throughout this paper we assumed that the observation distributions $f, g$, and the number of abnormal processes $L$ are known. A future research direction is to judiciously design and analyze CCS for cases where the observation distributions and the number of abnormal processes are unknown. A possible way to tackle unknown $f, g$ is to use parametric methods to estimate the unknown distribution parameters when the parametric models of $f, g$ are known. When the parametric models of $f, g$ are unknown, non-parametric methods can be incorporated. A possible way to handle an unknown number of abnormal processes $L$ is to design the exploration phases judiciously to infer the value of $L$, where the allocation step in the exploitation phases should be adapted accordingly.

\section{Appendix}

In this appendix we prove the asymptotic optimality of the CCS policy in terms of minimizing the Bayes risk (\ref{eq:bayes_risk}) as $c$ approaches zero, presented in Theorem \ref{th:asymptotic_optimality}. For ease of presentation, we focus on the case of a single anomaly 
($L=1$), and discuss the extension to multiple targets at the end of the proof. Without the loss of generality we prove the theorem when hypothesis $m$ is true. For convenience, we define $\Tilde{\ell}_j(i)$, a zero mean random variable under hypothesis $H_m$ as follows:
\beq
\label{eq:ell_tilda}
\Tilde{\ell}_j(i) \triangleq 
    \begin{cases}
           \ell_j(i)-D(g||f), & \mbox{if } j=m, \\
           \ell_j(i)+D(f||g), & \mbox{if } j\neq m.\\
    \end{cases}
\eeq \vspace{2mm}

\begin{lemma}[Finite sample error bound]
\label{lem:error_upperBound}
The error probability of CCS is upper bounded by
\beq
    \mathbf{P}_e \leq (M-1) c.
    \label{eq:error_upperBound}
\eeq
\end{lemma}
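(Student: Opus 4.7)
The plan is to establish the bound by a classical Wald-style change-of-measure argument tailored to the stopping rule $\tau = \inf\{n : \Delta S(n) \geq -\log c\}$. First I would fix an arbitrary true hypothesis $H_m$ and decompose
\[
\alpha_m = \mathbf{P}_m(\delta \neq m) = \sum_{j \neq m} \mathbf{P}_m(\delta = j).
\]
On the event $\{\delta = j\} = \{m^{(1)}(\tau) = j\}$, the stopping rule forces $S_j(\tau) - S_i(\tau) \geq -\log c$ for every $i \neq j$, and specialising to $i = m$ gives $S_j(\tau) - S_m(\tau) \geq -\log c$, or equivalently $\exp\!\bigl(S_m(\tau) - S_j(\tau)\bigr) \leq c$ pointwise on this event.

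Next I would identify the Radon--Nikodym derivative of $\mathbf{P}_m$ with respect to $\mathbf{P}_j$ on $\mathcal{F}_\tau$. Because the sensing rule $\phi(n)$ is a deterministic function of the past observations, it is measurable under both measures and cancels from the ratio. The per-sample density ratio then collapses to $g(y_m(n))/f(y_m(n))$ whenever $m \in \phi(n)$, to $f(y_j(n))/g(y_j(n))$ whenever $j \in \phi(n)$, and to one for every other probed cell (whose density is $f$ under both hypotheses). Telescoping over $n \leq \tau$ yields
\[
\log \frac{d\mathbf{P}_m}{d\mathbf{P}_j}\bigg|_{\mathcal{F}_\tau} = S_m(\tau) - S_j(\tau).
\]

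The bound then follows from a single line of change of measure:
\[
\mathbf{P}_m(\delta = j) = \mathbf{E}_j\!\left[\mathbf{1}_{\{\delta = j\}}\, e^{S_m(\tau) - S_j(\tau)}\right] \leq c\, \mathbf{P}_j(\delta = j) \leq c.
\]
Summing over the $M-1$ competing indices $j \neq m$ gives $\alpha_m \leq (M-1)c$ uniformly in $m$, and averaging against the prior yields $\mathbf{P}_e = \sum_m \pi_m \alpha_m \leq (M-1)c$, as claimed.

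The only subtlety, rather than a genuine obstacle, lies in justifying the likelihood-ratio identity under adaptive controlled sensing. This is handled by conditioning step by step on $\mathcal{F}_{n-1}$: since $\phi(n)$ is $\mathcal{F}_{n-1}$-measurable it disappears from the ratio, leaving only the product of per-sample ratios identified above, and the identity extends to the stopping time by a routine optional-stopping argument (or, equivalently, by summing over the disjoint events $\{\tau = n\}$). Almost-sure finiteness of $\tau$ under each $\mathbf{P}_j$ is inherited from the detection-time analysis carried out later in the appendix.
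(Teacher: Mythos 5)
Your proposal is correct and follows essentially the same change-of-measure argument as the paper: on the acceptance event $\{\delta=j\}$ the stopping rule forces $S_j(\tau)-S_m(\tau)\geq-\log c$, and switching from $\mathbf{P}_m$ to $\mathbf{P}_j$ via the likelihood ratio $e^{S_m(\tau)-S_j(\tau)}\leq c$ gives $\alpha_{m,j}\leq c$, hence $\alpha_m\leq (M-1)c$. The only difference is that you spell out the Radon--Nikodym identity under adaptive sensing explicitly, which the paper leaves implicit in the phrase ``by changing the measure.''
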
\vspace{2mm}

\begin{proof}
The probability of error is defined by
\beq
\mathbf{P}_e = \sum\limits_{m=1}^{M} \pi_m \cdot \alpha_m.
\eeq
Let $\alpha_{m,j}=\mathbf{P}_m(\delta = j)\mbox{ for all } j \neq m$. Thus, the probability of declaring a wrong target cell when hypothesis $H_m$ is true (i.e., $\delta \neq m$) is $\alpha_m =\sum_{j\neq m}\alpha_{m,j}$. 
We define the difference between the observed sum LLR of cells $m$ and $j$ by:
\beq
    \Delta S_{m,j}(n) \triangleq S_m(n)-S_j(n),
\eeq
and define
\beq
    \Delta S_m(n) \triangleq \underset{j\neq m}{\min} \Delta S_{m,j}(n).
\label{eq:deltaSm}
\eeq
Using definitions (\ref{eq:deltaSn}) and (\ref{eq:deltaSm}), we have:
\beq
    \Delta S(n) \triangleq S_{m^{(1)} (n)}(n) - S_{m^{(2)} (n)}(n) = \underset{m}{\max} \Delta S_m(n).
\eeq
Note that accepting hypothesis $H_j$ (i.e., $\Delta S_j(n) \geq -\log c$) implies that $\Delta S_{j,m} \geq -\log c$. By changing the measure, we can show that for all $j \neq m$ we get:
\beq
\bea{l}
    \alpha_{m,j}= \mathbf{P}_m(\delta = j) \vspace{1mm}\\
    = \mathbf{P}_m(\Delta S_j(\tau) \geq -\log c) \leq \mathbf{P}_m(\Delta S_{j,m}(\tau)\geq -\log c) \vspace{1mm}\\
    \leq c \mathbf{P}_j(\Delta S_{j,m}(\tau) \geq -\log c) \leq c.
\ena
\eeq
Finally,
$\alpha_m = \sum\limits_{j\neq m} \alpha_{m,j} \leq (M-1)c$
and equation (\ref{eq:error_upperBound}) follows. \textbf{This proves Statement $1$ of Theorem \ref{th:asymptotic_optimality} for a single anomaly}. 
\end{proof} \vspace{2mm} 
For the following definitions, recall that under hypothesis $H_m$, $S_m(n)$ is a random walk with a positive expected increment \vspace{1mm} $\mathbf{E}_m(\ell_m(n))=D(g||f)>0$ if cell $m$ is the target, while $S_j(n)$ for $j\neq m$ is a random walk with a negative expected \vspace{1mm} increment $\mathbf{E}_m(\ell_j(n))=-D(f||g)<0$. Intuitively, as will be shown analytically in the proof, the sample path of $S_m(n)$ will dominate those of $S_j(n), \; \forall j\neq m$, when $n$ is sufficiently large. For the rest of the proof, we define below two random times, namely $\tau_{m,1} \mbox{ and } \tau_{j,1}$. We start by defining a random time for the true hypothesis $H_m$.\vspace{2mm}

\begin{definition}
Define $\tau_{m,1}$ as the smallest integer such that $S_m(t)>0$ for all $t\geq\tau_{m,1}$. 
\end{definition}\vspace{2mm}

In addition, we define the actual number of measurements $N_{m,1}$ required for obtaining a positive sum LLR for each future sampling times of abnormal process $m$.\vspace{2mm}
\begin{definition}
Denote $N_{m,1}$ as the smallest integer such that $\sum\limits_{i=1}^{r} \ell_m(i)>0$ for all $r\geq N_{m,1}$. 
\end{definition}\vspace{2mm}

For each normal process $j\neq m$, we define the random time $\tau_{j,1}$ similarly for a negative sum LLR, after time $\tau_{m,1}$ has reached.\vspace{2mm}
\begin{definition}
For all $j\neq m$, denote $\tau_{j,1}\geq \tau_{m,1}$ as the smallest integer such that $S_j(t)<0$ for all $t\geq\tau_{j,1}$. 
\end{definition}\vspace{2mm}

In addition, for each normal process $j\neq m$, we define the actual number of measurements $N_{j,1}$ required for obtaining a negative sum LLR for all future sampling times of normal process $j$.\vspace{2mm}
\begin{definition}
Define $N_{j,1}$ as the smallest integer such that $\sum\limits_{i=1}^{r} \ell_j(i)<0$ for all $r\geq N_{j,1}$. 
\end{definition}\vspace{2mm}

\begin{lemma}
\label{lem:tau_m1}
There exists $C>0$ and $\gamma>0$ such that
\beq
    \mathbf{P}_m(\tau_{m,1} >n) \leq C \e^{-\gamma n},
\eeq
for all $m=1,2,...,M$.
\end{lemma}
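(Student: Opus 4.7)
The plan is to reduce $\tau_{m,1}$ to two ingredients: a large-deviation estimate for the i.i.d.\ log-likelihood-ratio sequence that cell $m$ generates, and a deterministic lower bound on the sampling rate of cell $m$ under CCS. Under $H_m$, every time cell $m$ is probed the observation is drawn from $g$ independently of everything else, so the increments $\{\ell_m(i)\}_{i\ge 1}$ (indexed by the $i$th probe of cell $m$) are i.i.d.\ with positive mean $D(g||f)$. Writing $T_r\triangleq\sum_{i=1}^{r}\ell_m(i)$ and assuming the cumulant generating function of $\ell_m$ is finite in a neighborhood of the origin (standard under finite KL), the Chernoff--Cram\'er bound yields $\mathbf{P}_m(T_r\le 0)\le \e^{-rI_0}$ for some $I_0>0$. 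Since $\{N_{m,1}>R\}=\{\exists\,r\ge R:\,T_r\le 0\}$, a geometric-series union bound gives
\begin{equation*}
\mathbf{P}_m(N_{m,1}>R)\;\le\;\sum_{r\ge R}\mathbf{P}_m(T_r\le 0)\;\le\;\frac{\e^{-R I_0}}{1-\e^{-I_0}},
\end{equation*}
so $N_{m,1}$ already has a subgeometric tail.

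The next step is to transfer this tail from $N_{m,1}$ (counted in probes of cell $m$) to $\tau_{m,1}$ (counted in time steps). Let $n_m(t)\triangleq\sum_{s=1}^{t}\textbf{1}_m(s)$ count the probes of cell $m$ by time $t$. Because $S_m(t)$ is piecewise constant between successive probes of cell $m$, one has $S_m(t)=T_{n_m(t)}$, and hence $\tau_{m,1}$ is exactly the first time $t$ at which $n_m(t)\ge N_{m,1}$. If I can show a deterministic growth of the form $n_m(t)\ge \rho_0 t-c_0$ for constants $\rho_0>0$ and $c_0\ge 0$, then pathwise $\tau_{m,1}\le (N_{m,1}+c_0)/\rho_0$, and the geometric tail on $N_{m,1}$ immediately yields $\mathbf{P}_m(\tau_{m,1}>n)\le C\e^{-\gamma n}$ with $\gamma=\rho_0 I_0$ and $C=\e^{c_0 I_0}/(1-\e^{-I_0})$.

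The hard part, in my view, is to establish this minimum sampling rate $\rho_0$ under the adaptive CCS schedule. During the exploration phase the round-robin rule probes cell $m$ at least once in every block of length $\lceil M/K\rceil$, which handles that regime trivially. Inside an exploitation phase CCS may probe another cell ordered earlier in the same machine as $m$ for a consecutive stretch, so a priori the gap between successive probes of cell $m$ could be long; however, the per-cell sensing budget fixed in Step 2 of CCS caps the length of each such stretch. Moreover, whenever the policy is (mistakenly) exploiting a wrong cell, every probe of cell $m$ carries positive expected LLR increment $D(g||f)$, so $S_m$ crosses zero quickly, $N_{H_1}(n)$ departs from $1$, and the policy reverts to exploration. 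Combining these two effects bounds the worst-case inter-probe gap for cell $m$ by a constant that depends on the problem parameters but not on $n$, yielding $\rho_0>0$ and completing the reduction. The symmetric argument for cells $j\ne m$, used in the ensuing Lemma on $\tau_{j,1}$, then follows by replacing $D(g||f)$ with $D(f||g)$ and flipping inequalities.
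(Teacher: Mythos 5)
Your first step---the Chernoff--Cram\'er tail for $N_{m,1}$ followed by a geometric union bound---is exactly the paper's estimate for $\mathbf{P}_m(N_{m,1}>n)$, and the reduction of $\tau_{m,1}$ to the probe count $n_m(t)$ is sound. The gap is in the step you yourself flag as hard: the claimed deterministic bound $n_m(t)\ge\rho_0 t-c_0$, i.e.\ a constant worst-case inter-probe gap for cell $m$, is false under CCS. When a wrong cell $j\neq m$ occupies machine $1$ during an exploitation phase, cell $m$ sits unprobed somewhere in the consecutive queue of a machine $k\neq 1$, and the per-cell budgets fixed in Step 2 make that queue have length of order $-\log c/I^*(M,K,L)$, not a constant; so the $\rho_0,c_0$ you would extract cannot be independent of $c$, whereas the later uses of this lemma (Statement 2 of the theorem, and the tail-sum bound in Lemma \ref{lem:expectedDetectionTime}) require $C,\gamma$ independent of $c$. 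Your proposed rescue mechanism is also misidentified: while $m$ is unprobed its sum LLR $S_m$ is frozen, so it cannot be $S_m$ crossing zero that forces $N_{H_1}(n)\neq 1$; what actually aborts a wrong exploitation phase is that machine $1$ keeps sampling the suspected cell $j$, whose sum LLR has negative drift under $H_m$ and becomes permanently negative after at most $N_{j,1}$ probes---a random time with an exponential tail, not a deterministic constant.

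The paper avoids any sampling-rate argument: it bounds $\tau_{m,1}$ pathwise by a fixed linear combination of per-cell probe counts, $\tau_{m,1}\le N_{m,1}+2\sum_{j\neq m}N_{j,1}$ in Case 1 (the total exploration time is charged to the round-robin probes of every cell, and each wrong exploitation phase is charged to the probes of the cell occupying machine $1$), and then applies your Chernoff/geometric-series estimate to each $N_{j,1}$ separately, finishing with a union bound. If you replace your deterministic gap bound by this stochastic accounting---the total time cell $m$ is starved before $\tau_{m,1}$ is at most $\sum_{j\neq m}N_{j,1}$---your argument closes and essentially coincides with the paper's proof.
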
\vspace{2mm}

\begin{proof}
In exploitation phases of Case $1$, a single process whose sum LLR must be positive is probed by machine $k=1$. Therefore, the maximal duration of process $j\neq m$ in machine $1$ is $N_{j, 1}$. In exploitation phases of Case $2$, a process whose sum LLR must be negative is probed by any machine and thus the maximal duration is $N_{m, 1}$. In exploration phases, the processes are probed one by one until only a single sum LLR is positive, and all other sum LLRs are negative. Therefore the entire duration of all exploration phases is upper bounded by $N_{m, 1}+\sum_{j\neq m} N_{j, 1}$. As a result, in Case $1$ $\tau_{m,1}$ is upper bounded by the entire duration in exploitation phases of each normal process being in probing machine 1, plus the entire duration of exploration phases, which yields: $\tau_{m, 1}\leq N_{m, 1}+2\sum_{j\neq m} N_{j, 1}$. For Case $2$, $\tau_{m,1}$ is upper bounded by $2 N_{m, 1}+\sum_{j\neq m} N_{j, 1}$. Next, we show that each term is sufficiently small with high probability. 

Using the definition of $N_{m,1}$ as the total number of samples taken from process $m$ until its sum LLR is strictly positive for all future sampling times, we have:
\beq
\bea{l}
    \mathbf{P}_m(N_{m,1} > n) \leq \mathbf{P}_m(\underset{r\geq n}{\inf} \; \sum\limits_{i=1}^r \ell_m(i) \leq 0) \\
    \leq \sum\limits_{r=n}^{\infty} \mathbf{P}_m(\sum\limits_{i=1}^r -\ell_m(i)\geq 0) \leq
    \sum\limits_{r=n}^{\infty}(E_m[\e^{-s \;\ell_m(1)}])^r ,\vspace{1mm}
\ena
\eeq
for all $s>0$.\\
Note that the moment generating function (MGF) is equal to one at $s=0$ and $\sum\limits_{i=1}^r \ell_m(i)$ is a random walk with a positive expected increment $\mathbf{E}_m(\ell_m(i))=D(g||f)>0$. Differentiating the MGF of $\ell_m(1)$ with respect to $s$ yields a strictly positive derivative at $s=0$. As a result, there exists $s>0$ and $\gamma_1>0$ such that $E_m[\e^{-s \;\ell_m(1)}]< \e^{-\gamma_1} < 1$. 
Hence, by using the Chernoff bound, there exists $C_1>0$ and $\gamma_1>0$ such that for cell $m$:
\beq
\label{eq:Nm1}
\bea{l}
    \mathbf{P}_m(N_{m,1} > n)\leq \sum\limits_{r=n}^{\infty}(E_m[\e^{-s \;\ell_m(1)}])^r \vspace{0.2cm} \\\hspace{24mm}
    \leq \sum\limits_{r=n}^{\infty} \e^{-\gamma_1 r} \leq C_1 \e^{-\gamma_1 n}.
\ena
\eeq

Similarly, using the definition of $N_{j,1}$ as the total number of samples taken from process $j$ until its sum LLR is strictly negative for all future sampling times, then for a normal process $j\neq m$:
\beq
\bea{l}
    \mathbf{P}_m(N_{j,1} > n) \leq \mathbf{P}_m(\underset{r\geq n}{\sup}\; \sum\limits_{i=1}^r \ell_j(i) \geq 0) \\
    \leq \sum\limits_{r=n}^{\infty} \mathbf{P}_m(\sum\limits_{i=1}^r \ell_j(i)\geq 0) \leq \sum\limits_{r=n}^{\infty}(E_m[\e^{s \;\ell_j(1)}])^r, \vspace{1mm}
\ena
\eeq
for all $s>0$.\\
For $j\neq m$, $\sum\limits_{i=1}^r \ell_j(i)$ is a random walk with a negative expected increment $\mathbf{E}_m(\ell_j(i))=-D(f||g)<0$. Differentiating the MGF of $\ell_j(1)$ with respect to $s$ yields strictly negative derivative at $s=0$. As a result, there exists $s>0$ and $\gamma_2>0$ such that $E_m[e^{s \; \ell_j(1)}]< \e^{-\gamma_2} < 1$. Hence, by using the Chernoff bound, there exists $C_2>0$ and $\gamma_2>0$ such that:
\beq
\label{eq:Nj1}
\bea{l}
    \mathbf{P}_m(N_{j,1} > n) \leq \sum\limits_{t=n}^{\infty}(E_m[\e^{s \; \ell_j(1)}])^r \vspace{0.2cm} \\\hspace{22mm}
    \leq \sum\limits_{r=n}^{\infty} \e^{-\gamma_2 r} \leq C_2 \e^{-\gamma_2 n}.
\ena
\eeq

For Case $1$ we finally have:
\beq
\label{eq:tau_m1}
\bea{l}
    \mathbf{P}_m(\tau_{m,1} >n) \leq \mathbf{P}_m(N_{m,1} + 2\sum\limits_{j\neq m} N_{j,1} > n) \vspace{1mm}\\
    \leq \mathbf{P}_m(N_{m,1} > \frac{n}{2M-1}) + 2\sum\limits_{j\neq m} \mathbf{P}_m(N_{j,1} > \frac{n}{2M-1})\vspace{1mm}\\
    \leq C_1 \e^{-\gamma_1 \frac{n}{2M-1}} + 2(M-1)\cdot C_2 \e^{-\gamma_2 \frac{n}{2M-1}} \vspace{1mm}\\
    \leq C \e^{-\gamma n},
\ena
\eeq
where $C\triangleq C_1+2(M-1)C_2>0 \mbox{ and } \gamma\triangleq \min(\frac{\gamma_1}{2M-1}, \frac{\gamma_2}{2M-1})>0$. 
A similar exponential decay can be easily obtained for Case 2 with the corresponding constants, which completes the proof.
\end{proof}\vspace{1mm}

We can deduce from the lemma above that for all $t>\tau_{m,1}$ in Case $1$ the target is always scheduled on machine $k=1$, where the probing order of all normal processes on machines $k=2, ..., K$ remains fixed. It is still possible that the sum LLR of a normal process will be positive at some time $t\geq\tau_{m,1}$, and then an exploration phase will be executed again. Nevertheless, since the sum LLR of the target is positive for all $t\geq\tau_{m,1}$, the target will be scheduled on machine $k=1$ when entering an exploitation phase again, and the probing order on machines $k\neq 1$ will remain the same. As a result, during time steps 
$t\geq\tau_{m,1}$, the number of switchings must be less than $\sum_{j\neq m} N_{j, 1}$. This is because at each time an exploration phase begins, samples are taken one by one from all processes. Similarly, in Case $2$, the number of switchings during exploration must also be less than $\sum_{j\neq m} N_{j, 1}$. Hence, the total number of switchings is bounded $\tau_{m,1}+\sum_{j\neq m} N_{j, 1}$. Therefore, applying Lemma \ref{lem:tau_m1} and (\ref{eq:Nj1}) yields a bounded expected number of switchings. \textbf{This proves Statement $2$ of Theorem \ref{th:asymptotic_optimality} for a single anomaly}.

We next define an additional random time for normal processes in Case $1$. The expansion to Case $2$ of the following definitions and Lemma \ref{lem:Nj} can be done with minor modifications when the sum LLRs for normal processes fulfill $S_j(t) \leq \frac{K \cdot D(f||g) \log c}{(M-1)I^*(M,K,L)}, \forall j\neq m$.\vspace{0.2cm}

\begin{definition}
For all $j\neq m$, denote $\tau_j\geq \tau_{m,1}$ as the smallest integer such that $S_j(t) \leq \frac{(K-1)D(f||g) \log c}{(M-1)I^*(M,K,L)}$ for all $t\geq \tau_j\geq \tau_{m,1}$.
\end{definition}
\vspace{2mm}

Note that $\tau_j$ is the last time process $j\neq m$ is probed during exploitation phases. \vspace{2mm}

In addition, for each normal process $j\neq m$, we define the actual number of measurements $N_j$ required for obtaining a sum LLR smaller than $\frac{(K-1)D(f||g) \log c}{(M-1)I^*(M,K,L)}$\vspace{1mm} for all future sampling times of normal process $j$ in Case $1$.\vspace{2mm}

\begin{definition}
\label{def:Nj}
Define $N_j$ as the smallest integer such that $\sum\limits_{i=1}^{r} \ell_j(i)\leq \frac{(K-1)D(f||g) \log c}{(M-1)I^*(M,K,L)}$ for all $r\geq N_j$. 
\end{definition}\vspace{2mm}

We next analyze major events for gathering enough information to distinguish between the target cell $m$ and all normal cells $j\neq m$. Lemma \ref{lem:Nj} below shows that $N_j$ in Case $1$ is less than $\sim-\frac{(K-1)\log c}{(M-1)I^*(M,K,L)}$ with high probability. We will use this result later to upper bound the total detection time.
\vspace{2mm}

\begin{lemma}
\label{lem:Nj}
Consider a normal process $j$ probed indefinitely. Then, for every fixed $\epsilon>0$ there exist $C>0$ and $\gamma>0$ such that in Case $1$:
\beq
    \mathbf{P}_m(N_j >n) \leq C \e^{-\gamma n}\vspace{1mm} \;\forall n> -(1+\epsilon)\frac{(K-1)\log c}{(M-1)I^*(M,K,L)},
\eeq
for all $m=1,2,...,M$.
\end{lemma}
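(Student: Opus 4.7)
The plan is to view $\{N_j > n\}$ as a tail event for the random walk $\sum_{i=1}^r \ell_j(i)$ that, under $H_m$, drifts to $-\infty$ at rate $D(f\|g)$. The threshold appearing in Definition \ref{def:Nj} is a fixed negative number (depending on $c$), while the lower bound assumed on $n$ is exactly what is needed to ensure that, for every $r \geq n$, this threshold is strictly above the mean $-r D(f\|g)$ by a quantity linear in $r$. Once that is established, a standard Chernoff bound on the recentered zero-mean walk $\tilde{\ell}_j(i)$ from \eqref{eq:ell_tilda}, combined with a union bound over $r\geq n$ and summation of a geometric series, will yield the claimed exponential decay.

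Concretely, I would first set $\beta \triangleq -\frac{(K-1)D(f\|g)\log c}{(M-1)I^*(M,K,L)} > 0$ so that Definition \ref{def:Nj} rewrites $\{N_j > n\}$ as the existence of some $r \geq n$ with $\sum_{i=1}^r \ell_j(i) > -\beta$; a union bound then gives
\beq
    \mathbf{P}_m(N_j > n) \leq \sum_{r=n}^{\infty} \mathbf{P}_m\!\left(\sum_{i=1}^r \ell_j(i) > -\beta\right).
\eeq
Using $\tilde{\ell}_j(i) = \ell_j(i) + D(f\|g)$ (zero mean under $H_m$), the inner event becomes $\sum_{i=1}^r \tilde{\ell}_j(i) > r\,D(f\|g) - \beta$. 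The hypothesis $n > (1+\epsilon)\beta/D(f\|g)$ gives $\beta < n\,D(f\|g)/(1+\epsilon)$, so for every $r \geq n$,
\beq
    r D(f\|g) - \beta \;\geq\; r D(f\|g) - \frac{n D(f\|g)}{1+\epsilon} \;\geq\; r D(f\|g) \cdot \frac{\epsilon}{1+\epsilon} \;\triangleq\; r\eta,
\eeq
with $\eta > 0$ independent of $c$ and $r$.

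It then remains to apply Chernoff's inequality to the zero-mean i.i.d.\ sum $\sum_{i=1}^r \tilde{\ell}_j(i)$. Since the MGF $\mathbf{E}_m[e^{s\tilde{\ell}_j(1)}]$ equals $1$ with zero derivative at $s=0$, one can pick $s>0$ small enough so that $\mathbf{E}_m[e^{s\tilde{\ell}_j(1)}]\,e^{-s\eta} < e^{-\gamma}$ for some $\gamma > 0$. This yields $\mathbf{P}_m(\sum_{i=1}^r \tilde{\ell}_j(i) > r\eta) \leq e^{-\gamma r}$, and summing the geometric series from $r=n$ to $\infty$ gives $\mathbf{P}_m(N_j > n) \leq C e^{-\gamma n}$ with $C = (1-e^{-\gamma})^{-1}$, as claimed. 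Crucially, both $C$ and $\gamma$ depend only on the distributions $f,g$ and on $\epsilon$, not on $c$, which is what makes the bound uniform in $c$ across the whole range $n > -(1+\epsilon)(K-1)\log c/[(M-1)I^*(M,K,L)]$.

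The step I expect to be delicate is the conversion of the $c$-dependent threshold $-\beta$ into a linear-in-$r$ margin above the mean, since a naive use of the lower bound on $n$ would only give a margin of $\epsilon\beta$, which tends to infinity but does not directly drive a Chernoff bound uniformly over $r$. The trick is to use $n \leq r$ inside the bound $\beta < nD(f\|g)/(1+\epsilon)$, which converts the constant margin into the linear margin $r\eta$ needed to apply Chernoff once and obtain a single exponential rate $\gamma$. For Case 2, the same argument applies verbatim after replacing the threshold by $\frac{K\,D(f\|g)\log c}{(M-1)I^*(M,K,L)}$, which only changes the constant $\beta$ and leaves the structure of the proof intact.
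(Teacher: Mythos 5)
Your proposal is correct and follows essentially the same route as the paper's proof: a union bound over $r\geq n$, recentering via $\tilde{\ell}_j$ from \eqref{eq:ell_tilda}, conversion of the $c$-dependent threshold into a linear margin $r\eta$ with $\eta=D(f\|g)\epsilon/(1+\epsilon)$ (the paper's $\epsilon_1$), a Chernoff bound, and summation of the geometric series. The only differences are notational (your $\beta$ versus the paper's multiplicative bracket), so nothing further is needed.
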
 \vspace{2mm}

\begin{proof}
By the definition of $N_j$, we have
\beq
\bea{l}
    \mathbf{P}_m(N_j >n) \vspace{1mm}\\
    \hspace{12mm}\leq \mathbf{P}_m \bigg( \underset{r \geq n}{\sup} \; \sum\limits_{v=1}^{r} \ell_{j}(v) \geq \frac{(K-1)D(f||g) \log c}{(M-1)I^*(M,K,L)} \bigg).
\ena
\eeq
For any $\epsilon> 0$ there exists $\epsilon_1 = D(f||g)\cdot\epsilon/(1+\epsilon) > 0$ such that:
\beq
\bea{l}
    \sum\limits_{v=1}^{r} \ell_{j}(v) - \frac{(K-1)D(f||g) \log c}{(M-1)I^*(M,K,L)} \vspace{2mm}\\ 
    \hspace{8mm}= \sum\limits_{v=1}^{r} \Tilde{\ell}_{j}(v) - r D(f||g) - \frac{(K-1)D(f||g) \log c}{(M-1)I^*(M,K,L)} \vspace{2mm}\\ 
    \hspace{8mm}= \sum\limits_{v=1}^{r} \Tilde{\ell}_{j}(v) - r D(f||g) \bigg[1- \frac{-(K-1)\log c}{r(M-1)I^*(M,K,L)}\bigg] \vspace{2mm}\\ 
    \hspace{8mm}\leq \sum\limits_{v=1}^{r} \Tilde{\ell}_{j}(v) - r \epsilon_1,
\ena
\eeq
for all $r\geq n>-(1+\epsilon)\frac{(K-1)\log c}{(M-1)I^*(M,K,L)}$, where $\Tilde{\ell}_{j}(v)$ is defined in (\ref{eq:ell_tilda}).\\
As a result,
\beq
    \sum\limits_{v=1}^{r} \ell_{j}(v) \geq \frac{(K-1)D(f||g) \log c}{(M-1)I^*(M,K,L)}
\eeq
implies 
\beq
    \sum\limits_{v=1}^{r} \Tilde{\ell}_{j}(v) \geq r\epsilon_1.
\eeq
Hence, for any $\epsilon>0 $ there exists $\epsilon_1 > 0$ such that:
\beq
\bea{l}
    \mathbf{P}_m(N_j >n) \leq \mathbf{P}_m \bigg( \underset{r \geq n}{\sup} \; \sum\limits_{v=1}^{r} \ell_{j}(v) \geq \frac{(K-1)D(f||g) \log c}{(M-1)I^*(M,K,L)} \bigg)\vspace{1mm}\\
    \hspace{21mm}\leq \sum\limits_{r=n}^\infty \mathbf{P}_m \bigg( \sum\limits_{v=1}^{r} \ell_{j}(v) \geq \frac{(K-1)D(f||g) \log c}{(M-1)I^*(M,K,L)}\bigg) \vspace{1mm}\\ 
    \hspace{21mm}\leq \sum\limits_{r=n}^\infty \mathbf{P}_m \bigg( \sum\limits_{v=1}^{r} \Tilde{\ell}_{j}(v) \geq r\epsilon_1 \bigg),
\ena
\eeq
for all $r\geq n>-(1+\epsilon)\frac{(K-1)\log c}{(M-1)I^*(M,K,L)}$.\vspace{1mm}\\
By applying the Chernoff bound, it can be shown that there exists $\gamma_1 >0$ such that: \vspace{1mm} $\mathbf{P}_m(\sum\limits_{v=1}^{r} \Tilde{\ell}_{j}(v) \geq r\epsilon_1) \leq \e^{-\gamma_1 r} \mbox{ for all } r\geq n>-(1+\epsilon)\frac{(K-1)\log c}{(M-1)I^*(M,K,L)}$.\vspace{1mm} Thus, there exist $C_1>0 \mbox{ and } \gamma_1>0 \mbox{ such that } \vspace{1mm} \mathbf{P}_m(N_j >n) \leq C_1 \e^{-\gamma_1 n} \mbox{ for all } n > -(1+\epsilon)\frac{(K-1)\log c}{(M-1)I^*(M,K,L)}$\vspace{1mm} as desired.\\
\end{proof} \vspace{2mm}

To modify Lemma \ref{lem:Nj} to Case $2$, we define $N_j$ as the number of measurements required for obtaining a sum LLR smaller than $\frac{K D(f||g) \log c}{(M-1)I^*(M,K,L)}$ for all future sampling times of normal process $j$. Then, using similar steps as specified above yields that for every fixed $\epsilon>0$ there exist $C>0$ and $\gamma>0$ such that:
\beq
\label{eq:lemma3_case2}
    \mathbf{P}_m(N_j >n) \leq C \e^{-\gamma n}\vspace{1mm} \;\forall n> -(1+\epsilon)\frac{K\log c}{(M-1)I^*(M,K,L)},
\eeq
for all $m=1,2,...,M$.\vspace{2mm}

In the sequel, we use $N_j$ to upper bound the total time spent during all exploitation phases of normal cell $j$.

\vspace{2mm}

\begin{definition}
Denote $\mathcal{H}^{k}\subseteq\{1,...,M\}$ as the set of indices, indicating which cells are ordered for probing by machine $k$ at time $t>\tau_{m,1}$ in exploitation phases.
\end{definition}\vspace{2mm}

Note that $\mathcal{H}^{k}$ remains fixed in exploitation phases for all $t\geq\tau_{m,1}$ for all $k$.\vspace{2mm}

\begin{definition}
Define the random variable $\bar{N}^{k}$ by: $\bar{N}^{k} \triangleq \sum\limits_{j\in \mathcal{H}^{k}} q_j^{k}\cdot N_j$.
\end{definition} \vspace{2mm}

Note that the r.v. $\bar{N}^{k}$ upper bounds the total time spent by sampling normal processes during all exploitation phases. This will be used later to upper bound the stopping time $\tau$ of CCS.\vspace{2mm}

\begin{lemma}
\label{lem:barTau_Hk}
Assume that CCS is implemented for all $t> \tau_{m,1}$. Then, for every fixed $\epsilon>0$ there exist $C>0$ and $\gamma>0$ such that for all machines $k\neq 1$ we have:
\beq
    \mathbf{P}_m(\bar{N}^{k} >n) \leq C\e^{-\gamma n} \; \forall n> -(1+\epsilon)\log c/I^*(M,K,L), \vspace{1mm}
\eeq
for all $m=1,2,...,M$.
\end{lemma}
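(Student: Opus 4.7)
The plan is to leverage the ordering structure of Step 2 of CCS together with Lemma \ref{lem:Nj}, and reduce the tail bound on the weighted sum $\bar{N}^{k}$ to a union bound over the tails of the individual $N_j$'s.

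First, I would record two structural facts inherited from the scheduler. Since machine $k\neq 1$ is assigned cells whose portions sum to $(M-1)/(K-1)$, we have $\sum_{j\in\mathcal{H}^{k}}q_j^{k}=(M-1)/(K-1)$. Moreover, because each machine contains at most two split cells (the first and last) and all remaining cells have $q_j^{k}=1$, the cardinality $|\mathcal{H}^{k}|$ is bounded above by a constant $B=B(M,K)$ independent of $c$ and $n$. These are exactly the properties established when the ordering is defined and will be the only geometric information I need.

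Next, the key reduction: using $q_j^{k}\ge 0$ and $\sum_{j\in\mathcal{H}^{k}}q_j^{k}=(M-1)/(K-1)$,
\[
\bar{N}^{k}=\sum_{j\in\mathcal{H}^{k}} q_j^{k}\, N_j \;\le\; \Bigl(\max_{j\in\mathcal{H}^{k}} N_j\Bigr)\cdot\frac{M-1}{K-1}.
\]
Consequently $\{\bar{N}^{k}>n\}\subseteq\bigcup_{j\in\mathcal{H}^{k}}\bigl\{N_j > n(K-1)/(M-1)\bigr\}$, so by a union bound
\[
\mathbf{P}_m\bigl(\bar{N}^{k}>n\bigr)\;\le\;\sum_{j\in\mathcal{H}^{k}}\mathbf{P}_m\!\left(N_j>\frac{(K-1)\,n}{M-1}\right).
\]

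Finally, I would invoke Lemma \ref{lem:Nj} term by term. Fix $\epsilon>0$. The condition $n>-(1+\epsilon)\log c/I^{*}(M,K,L)$ multiplied by $(K-1)/(M-1)$ yields exactly the threshold
\[
\frac{(K-1)\,n}{M-1}\;>\;-(1+\epsilon)\,\frac{(K-1)\log c}{(M-1)\,I^{*}(M,K,L)}
\]
required by Lemma \ref{lem:Nj} with the same $\epsilon$. Applying that lemma gives constants $C_{1},\gamma_{1}>0$ (independent of $c$ and $n$) such that $\mathbf{P}_m(N_j>(K-1)n/(M-1))\le C_{1}\exp(-\gamma_{1}(K-1)n/(M-1))$ for every $j\in\mathcal{H}^{k}$. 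Summing over $j$ and using $|\mathcal{H}^{k}|\le B$ yields
\[
\mathbf{P}_m(\bar{N}^{k}>n)\;\le\; B\,C_{1}\,\e^{-\gamma_{1}(K-1)n/(M-1)}\;=\;C\,\e^{-\gamma n},
\]
with $C\triangleq B\,C_{1}$ and $\gamma\triangleq\gamma_{1}(K-1)/(M-1)$, as desired.

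I do not anticipate a real obstacle here: once the $\max$-domination trick above replaces the weighted sum by a single term, the problem collapses to Lemma \ref{lem:Nj} and a finite union bound. The only mild subtlety worth spelling out is that the threshold $-(1+\epsilon)\log c/I^{*}(M,K,L)$ in the statement is precisely the one for which, after scaling by $(K-1)/(M-1)$, the hypothesis of Lemma \ref{lem:Nj} holds with the same $\epsilon$ -- this is what justifies the exponential bound uniformly in $c$. The Case 2 counterpart follows identically using \eqref{eq:lemma3_case2} in place of Lemma \ref{lem:Nj}, with the scaling factor $K/(M-1)$ replacing $(K-1)/(M-1)$.
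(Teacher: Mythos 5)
Your proposal is correct and follows essentially the same route as the paper: reduce the tail of $\bar{N}^{k}$ to a union bound over the individual $N_j$'s via the scheduler identity $\sum_{j\in\mathcal{H}^{k}}q_j^{k}=(M-1)/(K-1)$, then invoke Lemma \ref{lem:Nj} (and \eqref{eq:lemma3_case2} for Case 2) at the matching threshold. The only cosmetic difference is that you dominate the weighted sum by its maximum term, whereas the paper allocates the threshold $n$ proportionally to each $q_j^{k}$ before the union bound; both yield the same conclusion, and your version gives an exponent $\gamma_1(K-1)/(M-1)$ that does not involve the split fractions $q_j^{k}$.
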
 \vspace{2mm}

\begin{proof}
\label{lem:barTau_Hk_proof}
In Case $1$, after time $t>\tau_{m,1}$, $\;M-1$ normal processes are ordered in $K-1$ probing machines, such that in each probing machine cells are probed according to the set $\mathcal{H}^{k}$ for all $k\neq 1$. According to the ordering mechanism, for each probing machine $k=2, ...,K$ it holds:
\beq
\label{eq:sum_qj}
    \sum\limits_{j\in\mathcal{H}^{k}} q_j^{k} = \frac{M-1}{K-1}.
\eeq
Note that $q_j^k\cdot N_j$ measurements are taken from cell $j$ in machine $k$, where the upper bound of all measurements taken by machine $k$ is given by $\bar{N}^k$. By the definition of $\bar{N}^{k}$ we have:
\beq
\label{eq:barNk_dev}
\bea{l}
    \mathbf{P}_m(\bar{N}^{k} >n) 
    = \mathbf{P}_m(\sum\limits_{j\in \mathcal{H}^{k}} q_j^{k} \cdot N_j >n) \\\vspace{1.5mm}
    \hspace{21mm}= \mathbf{P}_m(\sum\limits_{j\in \mathcal{H}^{k}} q_j^{k} \cdot N_j > \sum\limits_{j\in \mathcal{H}^{k}} q_j^{k}\frac{K-1}{M-1} n) \\\vspace{1.5mm}
    \hspace{21mm}\leq \sum\limits_{j\in \mathcal{H}^{k}} \mathbf{P}_m(q_j^{k} N_j >q_j^{k}\frac{K-1}{M-1} n),
\ena
\eeq
where the second equality follows by applying (\ref{eq:sum_qj}).
By using Lemma \ref{lem:Nj}, there exist $C_j>0 \mbox{ and } \gamma_j$ such that:
\beq
\label{eq:qjk_Nj_samples}
    \mathbf{P}_m( q_j^k N_j >n)\leq C_j \e^{-\gamma_j n},
\eeq
for all $n>-(1+\epsilon)q_j^{k} \frac{(K-1)\log c}{(M-1)I^*(M,K,L)}$.
Thus, 
\beq
\label{eq:qjk_nj}
    \mathbf{P}_m(q_j^{k} N_j>q_j^{k} \frac{K-1}{M-1} n) \leq C_j \e^{-\gamma_j q_j^k\frac{K-1}{M-1} n},
\eeq
for all $n>-(1+\epsilon)\log c/I^*(M,K,L)$.
Finally, by combining (\ref{eq:barNk_dev}) and (\ref{eq:qjk_nj}), there exist $C\triangleq \sum\limits_{j\in \mathcal{H}^{k}} C_j$ and $\gamma\triangleq \underset{j}{\min}\{\gamma_j\cdot q_j^k\frac{K-1}{M-1}\}$ such that:
\beq
    \mathbf{P}_m(\bar{N}^{k} >n) \leq C \e^{-\gamma n},
\eeq
for all $n>-(1+\epsilon)\log c/I^*(M,K,L)$.

Similarly, in Case $2$ after time $t>\tau_{m,1}$, $\;M-1$ normal processes are ordered in $K$ probing machines. According to the probing mechanism, for each probing machine $k$, it holds that: 
\beq
\label{eq:sum_qj_2}
    \sum\limits_{j\in\mathcal{H}^{k}} q_j^{k} = \frac{M-1}{K},
\eeq
which leads:
\beq
\bea{l}
    \mathbf{P}_m(\bar{N}^{k} >n) = \mathbf{P}_m(\sum\limits_{j\in \mathcal{H}^{k}} q_j^{k} \cdot N_j > \sum\limits_{j\in \mathcal{H}^{k}} q_j^{k}\frac{K}{M-1} n) \\\vspace{1.5mm}
    \hspace{21mm}\leq \sum\limits_{j\in \mathcal{H}^{k}} \mathbf{P}_m(q_j^{k} N_j >q_j^{k}\frac{K}{M-1} n).
\ena
\eeq
Using the modification of Lemma \ref{lem:Nj} to Case $2$ presented in \eqref{eq:lemma3_case2}, there exist $C_j>0 \mbox{ and } \gamma_j$ such that:
\beq
    \mathbf{P}_m( q_j^k N_j >n)\leq C_j \e^{-\gamma_j n},
\eeq
for all $n>-(1+\epsilon)q_j^{k} \frac{K\log c}{(M-1)I^*(M,K,L)}$. The rest of the proof follows directly.
\end{proof} 
\vspace{2mm}

In what follows we define a random time $\tau_m>\tau_{m,1}$. $\tau_m$ can be viewed as the time when sufficient information has been gathered from cell $m$, to later distinguish between hypothesis $m$ and all false hypotheses $j\neq m$ in Case $1$. We point out that $\tau_m$ is not a stopping time. Note that the stopping time $\tau$ depends also on the information gathered from all normal cells, and more specifically information gathered from cell $m^{(2)}(\tau)$.\vspace{2mm}

\begin{definition}
For the target cell $m$, denote $\tau_m$ as the smallest integer such that $S_m(t)\geq -\frac{D(g||f)\log c}{I^*(M,K,L)}$ for all $t\geq\tau_m$.
\end{definition} \vspace{2mm} 

We next define the number of measurements $N_m$ to gather sufficient information from target cell $m$.\vspace{2mm} 

\begin{definition}
For the target cell $m$, denote $N_m$ as the smallest integer such that $\sum\limits_{i=1}^{r} \ell_m(i) \geq -\frac{D(g||f)\log c}{I^*(M,K,L)}$ for all $r\geq\ N_m$.
\end{definition} \vspace{2mm} 

In the sequel we use $N_m$ to bound the detection time spent over machine $k=1$ in Case $1$.\vspace{2mm} 

\begin{lemma}
\label{lem:N_m}
Assume that the target process $m$ is probed indefinitely by machine $k=1$ for all $t>\tau_{m,1}$ in Case $1$. Then, for every fixed $\epsilon>0$ there exist $C>0$ and $\gamma>0$ such that
\beq
    \mathbf{P}_m(N_m >n) \leq C \e^{-\gamma n}\;\; \forall n>-(1+\epsilon)\log c/I^*(M,K,L),
\eeq
for all $m=1,2,...,M$. 
\end{lemma}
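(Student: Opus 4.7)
The plan is to mirror the proof of Lemma \ref{lem:Nj} structurally, but with the roles of target and normal processes swapped: here we want the sum LLR of cell $m$ to \emph{exceed} a positive threshold rather than fall below a negative one. I would begin by writing
\begin{equation*}
\mathbf{P}_m(N_m > n) \leq \mathbf{P}_m\!\left(\inf_{r\geq n} \sum_{i=1}^{r} \ell_m(i) < -\frac{D(g||f)\log c}{I^*(M,K,L)}\right) \leq \sum_{r=n}^{\infty} \mathbf{P}_m\!\left(\sum_{i=1}^{r} \ell_m(i) < -\frac{D(g||f)\log c}{I^*(M,K,L)}\right).
\end{equation*}
Then, using $\ell_m(i) = \tilde{\ell}_m(i) + D(g||f)$ from \eqref{eq:ell_tilda}, the event inside the probability becomes $\sum_{i=1}^{r} \tilde{\ell}_m(i) < -r D(g||f)\bigl[1 + \log c / (r\, I^*(M,K,L))\bigr]$.

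The first key step is to show that the constraint $r \geq n > -(1+\epsilon)\log c / I^*(M,K,L)$ creates a strictly positive margin. Indeed, this range gives $1 + \log c/(r\,I^*(M,K,L)) > 1 - 1/(1+\epsilon) = \epsilon/(1+\epsilon)$, so choosing $\epsilon_1 = D(g||f)\cdot \epsilon/(1+\epsilon) > 0$ guarantees that $-r D(g||f)\bigl[1 + \log c/(rI^*(M,K,L))\bigr] < -r\epsilon_1$. Hence the event $\{\sum_{i=1}^{r} \ell_m(i) < -D(g||f)\log c/I^*(M,K,L)\}$ implies the event $\{\sum_{i=1}^{r}\tilde{\ell}_m(i) < -r\epsilon_1\}$ on the zero-mean random walk.

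The second key step is a standard Chernoff bound. Since $\tilde{\ell}_m(1)$ has mean zero under $\mathbf{P}_m$, the function $s\mapsto \mathbf{E}_m[e^{-s\tilde{\ell}_m(1)}]\cdot e^{-s\epsilon_1}$ has value $1$ and derivative $-\epsilon_1 < 0$ at $s=0$, so there exists $s>0$ and $\gamma_1 > 0$ with $\mathbf{E}_m[e^{-s\tilde{\ell}_m(1)}]\cdot e^{-s\epsilon_1} \leq e^{-\gamma_1}$. This gives $\mathbf{P}_m(\sum_{i=1}^{r}\tilde{\ell}_m(i) < -r\epsilon_1) \leq e^{-\gamma_1 r}$. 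Summing the geometric series,
\begin{equation*}
\mathbf{P}_m(N_m > n) \leq \sum_{r=n}^{\infty} e^{-\gamma_1 r} \leq C e^{-\gamma n}
\end{equation*}
for suitable $C>0$ and $\gamma>0$, valid for all $n > -(1+\epsilon)\log c/I^*(M,K,L)$, completing the proof.

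I do not anticipate a main obstacle: the argument is essentially routine once the correct linearization $\tilde{\ell}_m(i) = \ell_m(i) - D(g||f)$ is substituted. The only subtle bookkeeping is tracking the sign conventions carefully (the threshold $-D(g||f)\log c/I^*(M,K,L)$ is positive because $\log c < 0$) and verifying the margin inequality for $r \geq n$. The role of the hypothesis that machine $k=1$ probes cell $m$ indefinitely after $\tau_{m,1}$ is only to justify that the $\ell_m(i)$ accumulate without interruption, so that the i.i.d.\ sum representation is valid.
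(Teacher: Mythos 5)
Your proposal is correct and follows essentially the same route as the paper's proof: the same union bound over $r \geq n$, the same centering $\ell_m(i) = \tilde{\ell}_m(i) + D(g||f)$, the same margin $\epsilon_1 = D(g||f)\,\epsilon/(1+\epsilon)$ extracted from the constraint $r \geq n > -(1+\epsilon)\log c/I^*(M,K,L)$, and the same Chernoff bound on the zero-mean walk followed by summing the geometric series. The only difference is that you spell out the derivative computation behind the Chernoff step, which the paper leaves implicit.
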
 
\vspace{2mm}

\begin{proof}
\label{lem:tau_m_proof}
By the definition of $N_m$ as the smallest integer such that $\sum\limits_{i=1}^{r} \ell_m(i) \geq -\frac{D(g||f)\log c}{I^*(M,K,L)}$ \vspace{1.3mm} for all $r\geq N_m$, we get
\beq
    \mathbf{P}_m(N_m >n) \leq \mathbf{P}_m \bigg(\underset{r \geq n}{\inf} \; \sum\limits_{v=1}^{r} \ell_{m}(v) \leq -\frac{D(g||f) \log c}{I^*(M,K,L)} \bigg).
\eeq
For any $\epsilon > 0$ there exists $\epsilon_1=D(g||f)\cdot\epsilon/(1+\epsilon) > 0$ such that:
\beq
\bea{l}
    \sum\limits_{v=1}^{r} \ell_{m}(v) + \frac{D(g||f) \log c}{I^*(M,K,L)} \vspace{2mm}\\
    \hspace{8mm}= \sum\limits_{v=1}^{r} \Tilde{\ell}_{m}(v) + r D(g||f) + \frac{D(g||f) \log c}{I^*(M,K,L)} \vspace{2mm}\\
    \hspace{8mm}\geq \sum\limits_{v=1}^{r} \Tilde{\ell}_{m}(v) + r \epsilon_1 ,
\ena
\eeq
for all $r\geq n>-(1+\epsilon)\log c /I^*(M,K,L)$, where $\Tilde{\ell}_{m}(v)$ is defined in (\ref{eq:ell_tilda}).\\
As a result,
\beq
    \sum\limits_{v=1}^{r} \ell_{m}(v) \leq -\frac{D(g||f)\log c }{I^*(M,K,L)}
\eeq
implies 
\beq
    \sum\limits_{v=1}^{r} \Tilde{\ell}_{m}(v) \leq -r \epsilon_1.
\eeq
Hence, for any $\epsilon>0 $ there exists $\epsilon_1 > 0 $ such that:
\beq
\bea{l}
    \mathbf{P}_m(N_m >n) \leq \mathbf{P}_m \bigg( \underset{r \geq n}{\inf} \; \sum\limits_{v=1}^{r} \ell_{m}(v) \leq -\frac{D(g||f)\log c}{I^*(M,K,L)} \bigg)\vspace{1mm}\\ 
    \hspace{22mm}\leq \sum\limits_{r=n}^\infty \mathbf{P}_m \bigg( \sum\limits_{v=1}^{r} \ell_{m}(v) \leq -\frac{D(g||f) \log c}{I^*(M,K,L)}\bigg) \vspace{1mm}\\ 
    \hspace{22mm} \leq \sum\limits_{r=n}^\infty \mathbf{P}_m \bigg( \sum\limits_{v=1}^{r} -\Tilde{\ell}_{m}(v) \geq r \epsilon_1 \bigg),
\ena
\eeq
for all $r\geq n> -(1+\epsilon)\log c/I^*(M,K,L)$. \\
By applying the Chernoff bound, it can be shown that there exists $\gamma_1 >0$ such that: \vspace{1.5mm} $\mathbf{P}_m\bigg ( \sum\limits_{v=1}^{r} -\Tilde{\ell}_{m}(v) \geq r \epsilon_1 \bigg) \leq \e^{-\gamma_1 r} \mbox{ for all } r \geq n > -(1+\epsilon)\log c/I^*(M,K,L)$.\vspace{1.5mm} Thus, there exist $C_1>0 \mbox{ and } \gamma_1>0 \mbox{ such that } \mathbf{P}_m(N_m >n) \leq C_1 \e^{-\gamma_1 n} \mbox{ for all } n > -(1+\epsilon)\log c/I^*(M,K,L)$. \vspace{1.5mm} \\
\end{proof}

We next define a random time for normal processes, probed by machine $k\neq 1$ in Case $1$ and by machine $k$ in Case $2$ for time $t>\tau_{m,1}$. Recall that a cell cannot be probed by more than one probing machine at a time. Therefore, in a case of a conflict where two successive machines are scheduled to probe the same process simultaneously, an additional idle time would be added to prevent this situation and guarantee the feasibility of the sensing process. An illustration can be found in Fig. \ref{fig:idled_delay}, presented in Section \ref{sec:CCSpolicy}. \vspace{2mm}

\begin{definition}
Consider normal processes $j',j$, such that for all $t>\tau_{m,1}$, the probing order on machine $k\neq 1$ in Case $1$ and on machine $k$ in Case $2$ is process $j'$ followed by a split process $j$ (process $j$ is ordered to be sampled by both machines $k$ and $k+1$). For all $j\neq m$, denote $D_{j'}^{k}$ as the additional idled time added to process $j'$ in probing machine $k$ to prevent cell $j$ being probed simultaneously by machines $k$ and $k+1$.
\end{definition}\vspace{2mm}

In the lemma below we show that $D_{j'}^{k}$ is sufficiently small with high probability. Since this statement holds for all processes $j'\neq m$ which are ordered to be probed by two successive machines, this results in a bounded expected idled time spent in the entire running time.\vspace{2mm}

\begin{lemma}
\label{lem:tauj_delay}
Consider normal processes $j',j$, such that for all $t>\tau_{m,1}$, the probing order on machine $k\neq 1$ in Case $1$ and on machine $k$ in Case $2$ is process $j'$ followed by a split process $j$ (process $j$ is ordered to be sampled by both machines $k$ and $k+1$). Then, there exist $C>0$ and $\gamma>0$ such that for all $k\neq 1$ in Case $1$, $k$ in Case $2$ and $j'\neq m$
\beq
    \mathbf{P}_m(D_{j'}^{k} > n) \leq C \e^{-\gamma n},
\eeq
for all $m=1,2,...,M$.
\end{lemma}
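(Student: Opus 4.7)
The plan is to reduce the bound on the idle time $D_{j'}^{k}$ to an already-controlled quantity, namely the sampling time of the split portion of $j$ on machine $k+1$. The key structural observation is that $D_{j'}^{k}$ is inserted on machine $k$ for the sole purpose of preventing machine $k$ from starting to probe $j$ before machine $k+1$ has finished its portion $q_{j}^{k+1}$ of $j$. In the worst case (when $j'$ consumes negligible time on machine $k$), the required idle time equals the total duration machine $k+1$ spends on its portion of $j$, which is at most $q_{j}^{k+1} \cdot N_{j}$. Hence
\[
D_{j'}^{k} \;\leq\; q_{j}^{k+1} \cdot N_{j}.
\]

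Given this dominating bound, the probabilistic part of the argument is already in hand. I would invoke Lemma \ref{lem:Nj} in Case $1$ (and its Case $2$ analogue \eqref{eq:lemma3_case2}) exactly as in the derivation of \eqref{eq:qjk_Nj_samples} inside the proof of Lemma \ref{lem:barTau_Hk}, obtaining constants $C_{j}>0$ and $\gamma_{j}>0$ with
\[
\mathbf{P}_{m}\bigl(q_{j}^{k+1} N_{j} > n\bigr) \;\leq\; C_{j}\, e^{-\gamma_{j} n}
\]
for all $n$ above the relevant threshold $-(1+\epsilon)q_{j}^{k+1}\tfrac{(K-1)\log c}{(M-1)I^{*}(M,K,L)}$ (or the Case $2$ counterpart). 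Combining the two displays yields $\mathbf{P}_{m}(D_{j'}^{k}>n) \leq C_{j} e^{-\gamma_{j} n}$ on the same range. For $n$ below the threshold the trivial bound $\mathbf{P}_{m}(D_{j'}^{k}>n)\leq 1$ can be absorbed by enlarging the constant, producing a single pair $(C,\gamma)$ that works uniformly, as stated.

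The main obstacle is bookkeeping rather than probabilistic: one must justify that $q_{j}^{k+1} N_{j}$ is indeed an upper envelope for $D_{j'}^{k}$ over the \emph{entire} running time, not just a single exploitation sub-epoch. This uses the fact that after $\tau_{m,1}$ the ordering $\mathcal{H}^{k}$ of the normal cells on machines $k\neq 1$ (Case $1$) or on all $K$ machines (Case $2$) is fixed, as established in the discussion following Lemma \ref{lem:tau_m1}; therefore the same pair $(j',j)$ and the same split fraction $q_{j}^{k+1}$ recur, and the cumulative idle time incurred for this pair across all exploitation phases is still bounded by $q_{j}^{k+1} N_{j}$ because $N_{j}$ already counts samples across all such phases (Definition \ref{def:Nj}). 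Once this structural point is settled, the conclusion follows immediately from the previously established tail estimate, with no new concentration argument required.
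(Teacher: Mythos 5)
There is a genuine gap: the dominating bound $D_{j'}^{k}\leq q_j^{k+1}N_j$ is valid but far too lossy, and the lemma cannot be recovered from it. The random variable $q_j^{k+1}N_j$ concentrates around $n_j=-(1+\epsilon)q_j^{k+1}\frac{(K-1)\log c}{(M-1)I^*(M,K,L)}$, a quantity that diverges as $c\to 0$; consequently your tail estimate only kicks in for $n$ above this diverging threshold, and your proposal to ``absorb'' the range $n<n_j$ into the constant forces $C\geq \e^{\gamma n_j}$, i.e.\ a constant that blows up like a negative power of $c$. The lemma, however, asserts (and the downstream argument requires) a single pair $(C,\gamma)$ giving an exponential tail for \emph{all} $n$, uniformly in $c$ --- in other words, that $D_{j'}^{k}$ is $O(1)$. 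Under your bound one only gets $\mathbf{E}_m[D_{j'}^{k}]=O(-\log c)$; summing the idle times over the split processes in each machine would then add a nonvanishing constant fraction of $-\log c/I^*(M,K,L)$ to $\tau^k$ in the proof of Lemma \ref{lem:expectedDetectionTime}, which would destroy the claim $\mathbf{E}_m(\tau)\leq -(1+o(1))\log c/I^*(M,K,L)$ and hence the asymptotic optimality.

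The missing idea is a cancellation: machine $k$ is not idle for the whole duration of $j$'s portion on machine $k+1$, because it is busy sampling $j'$ during that time, and the ordering mechanism guarantees $q_{j'}^{k}\geq q_j^{k+1}$. The paper therefore bounds $D_{j'}^{k}$ by the \emph{difference} $q_j^{k+1}N_j-q_{j'}^{k}\tilde N_{j'}$, where $\tilde N_{j'}$ is the first-passage version of $N_{j'}$, and then uses that $q_j^{k+1}N_j$ and $q_{j'}^{k}\tilde N_{j'}$ concentrate around $n_j$ and $\tilde n_{j'}\geq n_j$ respectively: splitting the event $\{q_j^{k+1}N_j-q_{j'}^{k}\tilde N_{j'}>n\}$ into $\{q_j^{k+1}N_j>n_j+\tfrac n2\}\cup\{q_{j'}^{k}\tilde N_{j'}<\tilde n_{j'}-\tfrac n2\}$, both upper and lower deviations decay exponentially in $n$ with $c$-independent constants, because the large centering terms cancel. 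Your write-up keeps only the first of these two events and discards the subtrahend, which is exactly where the $-\log c$ growth leaks in. To repair the proof you would need to reintroduce the time spent on $j'$ (a lower-tail concentration bound for $\tilde N_{j'}$, which is not contained in Lemma \ref{lem:Nj} as you invoke it) and exploit $q_{j'}^{k}\geq q_j^{k+1}$.
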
 \vspace{2mm}

\begin{proof}
For convenience, throughout the proof we refer to the process ordering as in Fig. \ref{fig:idled_delay}. In general, an idle time is added to process $j'$ when the time process $j$ in machine $k+1$ reaches its desired sum LLR is longer than the time process $j'$ reaches its desired sum LLR. By the construction of the ordering scheduler, when process $j'$ is followed by a split process $j$ in machine $k$, $q_{j'}^{k}\geq q_j^{k+1}$ is fulfilled (an illustration to demonstrate the relation between $q_{j'}^{k} \mbox{ and } q_j^{k+1}$ can be found in Fig. \ref{fig:ordering} at ordering time $t=16$ for machines $k=2,3,4$). The case where $q_{j'}^{k}=1$ is specific, and therefore we will refer to a general case $q_{j'}^{k}\leq 1$ for the remaining of the proof. 

We define the total time between the last time process $j$'s sum LLR is smaller than $q_j^{k+1}\cdot\frac{(K-1)D(f||g)\log c}{(M-1)I^*(M,K,L)}$ and the first time process $j'$'s sum LLR is smaller than $q_{j'}^{k}\cdot\frac{(K-1)D(f||g)\log c}{(M-1)I^*(M,K,L)}$. Let $\tilde{N}_{j'}$ be the number of measurements required for obtaining a sum LLR smaller than $\frac{(K-1)D(f||g)\log c}{(M-1)I^*(M,K,L)}$ for the first time, when sampling normal process $j'$. By definition \ref{def:Nj}, $N_j$ is the number of measurements required for obtaining a sum LLR smaller than $\frac{(K-1)D(f||g)\log c}{(M-1)I^*(M,K,L)}$ for all future sampling times of normal process $j$.
Therefore, we use the two r.v. $q_j^{k+1}\cdot N_j$ and $q_{j'}^k \cdot \tilde{N}_{j'}$ as the number of measurements taken from processes $j \mbox{ and } j'$ to reach their desired sum LLRs, when both processes are sampled indefinitely for $t>\tau_{m,1}$. The entire idle time between the two processes is bounded by the number of measurements $q_j^{k+1}\cdot N_j-q_{j'}^k \tilde{N}_{j'}$.

Next, we define $n_j\triangleq -(1+\epsilon)q_j^{k+1}\frac{(K-1)\log c}{(M-1)I^*(M,K,L)}$, and $\tilde{n}_{j'}\triangleq -(1+\epsilon)q_{j'}^{k}\frac{(K-1)\log c}{(M-1)I^*(M,K,L)}$. 
Then,
\beq
\label{eq:Dj'k}
\bea{l}
    \mathbf{P}_m(D_{j'}^{k} > n) \leq \mathbf{P}_m(q_j^{k+1}\cdot N_j-q_{j'}^k \cdot \tilde{N}_{j'} > n) \vspace{1mm}\\
    \leq \mathbf{P}_m(q_j^{k+1}\cdot N_j>n_j + \frac{n}{2}) +\mathbf{P}_m (q_{j'}^k \cdot\tilde{N}_{j'}< n_j - \frac{n}{2}) \vspace{1mm}\\
    \leq \mathbf{P}_m(q_j^{k+1}\cdot N_j>n_j + \frac{n}{2}) +\mathbf{P}_m (q_{j'}^k \cdot \tilde{N}_{j'}<\tilde{n}_{j'} - \frac{n}{2}),\vspace{1mm}
\ena
\eeq 
where the last inequality follows since $\tilde{n}_{j'}\geq n_j$, as $q_{j'}^{k}\geq q_j^{k+1}$. Note that the r.vs $q_j^{k+1}\cdot N_j$, and $q_{j'}^k \cdot\tilde{N}_{j'}$ are concentrated around the values $n_j$, and $\tilde{n}_{j'}$, respectively. Therefore, developing both terms as in Lemma \ref{lem:Nj} and (\ref{eq:qjk_Nj_samples}), we obtain that both terms decrease exponentially with $n$ which proves the lemma for Case 1.
The proof can be easily shown for Case $2$ with minor modifications, by defining $\tilde{N}_{j'}$ and $N_j$ as the number of measurements required for obtaining a sum LLR smaller than $\frac{K D(f||g)\log c}{(M-1)I^*(M,K,L)}$ for the first time and for all future sampling times of normal process $j$, respectively. Then, applying Lemma \ref{lem:Nj}, and defining $n_j$ and $\tilde{n}_{j'}$ such that $n_j\triangleq -(1+\epsilon)q_j^{k+1}\frac{K\log c}{(M-1)I^*(M,K,L)}$, and $\tilde{n}_{j'}\triangleq -(1+\epsilon)q_{j'}^{k}\frac{K\log c}{(M-1)I^*(M,K,L)}$, completes the proof.
\end{proof}\vspace{2mm}

Finally, we conclude that the detection time of CCS is asymptotically bounded by $-\log c/I^*(M,K,L)$, as shown in the next lemma. This will lead to the expected detection time $\tau$, desired for proving the asymptotic optimality of CCS. \vspace{2mm}

\begin{lemma}
\label{lem:expectedDetectionTime}
The expected detection time $\tau$ under the CCS policy satisfies
\beq
    \mathbf{E}_m(\tau) \leq -(1+o(1)) \frac{\log c}{I^*(M,K,L)},
\eeq
for $m=1,2,...,M$.
\end{lemma}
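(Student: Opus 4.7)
The plan is to decompose $\tau$ into the initial exploration prelude plus the exploitation time, and to show that the latter is determined by the slowest of the $K$ probing machines, whose completion time concentrates around $-(1+\epsilon)\log c/I^*(M,K,L)$ by the previous lemmas.

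First I would write, for each hypothesis $H_m$,
\begin{equation*}
    \tau \leq \tau_{m,1} + \max_{1\leq k\leq K} T_k,
\end{equation*}
where $T_k$ is the total number of sample slots machine $k$ consumes in exploitation phases after time $\tau_{m,1}$. This is legitimate because, by Lemma \ref{lem:tau_m1} and the discussion following it, once $t\geq\tau_{m,1}$ the allocation of cells to machines is fixed (target on machine $1$ in Case 1, excluded in Case 2; normal cells consecutively assigned to the remaining machines). In Case 1, machine $k=1$ only samples the target until its sum LLR exceeds $-D(g\|f)\log c/I^*(M,K,L)$, so $T_1\leq N_m$. For $k\neq 1$, machine $k$ spends $\bar{N}^k$ active samples on its assigned normal cells plus at most one idle delay $D_{j'}^k$ per split cell, giving $T_k \leq \bar{N}^k + \sum_{j'\in\mathcal{H}^k} D_{j'}^k$; the number of such split terms is at most $|\mathcal{H}^k|=O(M)$, which is a constant in $c$. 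Case 2 is handled identically with machine indices shifted.

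Next I would convert the tail estimates into expectation bounds through the standard identity $\mathbf{E}_m[X]=\int_0^\infty \mathbf{P}_m(X>n)\,dn$. For any nonnegative $X$ satisfying $\mathbf{P}_m(X>n)\leq C\e^{-\gamma n}$ for all $n>-(1+\epsilon)\log c/I^*(M,K,L)$,
\begin{equation*}
    \mathbf{E}_m[X] \leq -(1+\epsilon)\frac{\log c}{I^*(M,K,L)} + \int_{-(1+\epsilon)\log c/I^*(M,K,L)}^{\infty} C\e^{-\gamma n}\, dn \leq -(1+\epsilon)\frac{\log c}{I^*(M,K,L)} + \frac{C}{\gamma}.
\end{equation*}
Applying this to $N_m$ (Lemma \ref{lem:N_m}), to $\bar{N}^k$ (Lemma \ref{lem:barTau_Hk}), and to each $D_{j'}^k$ (Lemma \ref{lem:tauj_delay}), I conclude $\mathbf{E}_m[T_k]\leq -(1+\epsilon)\log c/I^*(M,K,L) + O(1)$ for every $k$. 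A union bound preserves the same exponential tail threshold for $\max_k T_k$, because $\mathbf{P}_m(\max_k T_k>n)\leq \sum_{k=1}^K \mathbf{P}_m(T_k>n)$, so $\mathbf{E}_m[\max_k T_k]\leq -(1+\epsilon)\log c/I^*(M,K,L) + O(1)$. Combining with $\mathbf{E}_m[\tau_{m,1}]=O(1)$ from Lemma \ref{lem:tau_m1}, dividing through by $-\log c$, and letting $\epsilon\to 0$ slowly as $c\to 0$ yields the claimed asymptotic inequality.

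The main obstacle is the bookkeeping of the idle-delay contributions $D_{j'}^k$: although Lemma \ref{lem:tauj_delay} gives each one an $O(1)$ expectation, one must ensure they aggregate to an $O(1)$ additive term rather than scaling with $-\log c$. This reduces to verifying that at most a bounded (in $c$) number of split events occur per machine per exploitation phase, and that the number of exploitation phases after $\tau_{m,1}$ is itself $O(1)$ in expectation, both of which follow from the structure of the ordering rule together with the exponential tail of $\tau_{m,1}$.
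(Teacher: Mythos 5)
Your proposal follows essentially the same route as the paper: decompose $\tau$ by probing machine, bound each machine's time using Lemmas \ref{lem:tau_m1}, \ref{lem:N_m}, \ref{lem:barTau_Hk} and \ref{lem:tauj_delay}, take a union bound over the $K$ machines, and convert the exponential tails into an expectation bound via the tail-sum formula. The only point you gloss over is that exploration phases can recur \emph{after} $\tau_{m,1}$ (a normal cell's sum LLR may drift positive again, forcing $N_{H_1}(n)\neq 1$), so the decomposition $\tau\leq\tau_{m,1}+\max_k T_k$ with $T_k$ counting only exploitation slots is not quite exhaustive; the paper closes this by adding the term $\sum_{j\neq m}N_{j,1}$ to each $\tau^{k}$, which is $O(1)$ in expectation by \eqref{eq:Nj1} and does not affect the asymptotics.
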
 \vspace{2mm}

\begin{proof}
\label{lem:expectedDetectionTime_proof}
Define $\tau^{k}$ as the total time spent on machine $k$. Thus, the stopping time $\tau$ of CCS is determined by the slowest machine, and is given by:
\beq 
    \tau \leq \max\{\tau^{1},\tau^{2},...,\tau^{K}\}.
\eeq
It remains to show that the probability $\mathbf{P}_m(\max\{\tau^{1},\tau^{2},...,\tau^{K}\} >n)$\vspace{1mm} decreases exponentially for all $n>-(1+\epsilon)\log c/I^*(M,K,L)$. Note that:
\beq
\label{eq:max_totalDelay_k}
    \mathbf{P}_m(\max\{\tau^{1},\tau^{2},...,\tau^{K}\} >n) \leq \sum\limits_{k=1}^{K} \mathbf{P}_m(\tau^{k} >n).
\eeq

For machine $k=1$ in Case $1$, the total time spent on the machine is bounded by $\tau^1\leq \tau_{m,1}+N_m$. Hence, by combining Lemmas \ref{lem:tau_m1} and \ref{lem:N_m}, for every fixed $\epsilon_1>0$ we get: 
\beq
\label{eq:tau_k_1}
\bea{l}
    \mathbf{P}_m(\tau^{1} >n)\leq \mathbf{P}_m(\tau_{m,1}+N_m >n)\vspace{1.5mm}\\
    \hspace{2cm} \leq \mathbf{P}_m(\tau_{m,1} >\epsilon_1 n) + \mathbf{P}_m(N_m >(1-\epsilon_1)n) \vspace{1.5mm}\\ 
    \hspace{2cm} \leq C_1 \e^{-\gamma_1 \epsilon_1 n}+C_2 \e^{-\gamma_2 (1-\epsilon_1)n} \leq \Tilde{C} \e^{-\Tilde{\gamma} n}\vspace{2mm}\\
    \hspace{35mm} \forall n>-(1+\tilde{\epsilon})\log c/I^*(M,K,L),
\ena
\eeq
where $\tilde{\epsilon}=\frac{\epsilon+\epsilon_1}{1-\epsilon_1}$, $\Tilde{C}>0$ and $\Tilde{\gamma}>0$.

For machines $k\neq 1$ in Case $1$ and machines $k$ in Case $2$, for $t\geq\tau_{m,1}$, the detection time can be bounded by summing over all normal cells probed in each machine:
\beq
    \tau^{k}\leq \left(\tau_{m,1} + \sum\limits_{j\neq m} N_{j,1}\right) + \bar{N}^{k} + \sum\limits_{j\in\mathcal{H}^{k}} D_j^{k}.
\eeq
By combining (\ref{eq:Nj1}) with Lemmas \ref{lem:tau_m1}, \ref{lem:barTau_Hk} and \ref{lem:tauj_delay}, for every fixed $\epsilon_1>0$ there exist $\Tilde{C}>0$ and $\Tilde{\gamma}>0$ such that: 
\beq
\label{eq:tau_k}
\bea{l}
    \mathbf{P}_m(\tau^{k} > n) \vspace{1.5mm}\\
    \hspace{1cm} \leq \mathbf{P}_m(\tau_{m,1} >\frac{\epsilon_1}{3}n) + \sum\limits_{j\neq m} \mathbf{P}_m(N_{j,1} >\frac{\epsilon_1}{3(M-1)}n) \vspace{1.5mm}\\
    \hspace{1cm} + \mathbf{P}_m(\bar{N}^{k} >(1-\epsilon_1)n) + \sum\limits_{j\in\mathcal{H}^{k}}\mathbf{P}_m(D_j^{k} >\frac{\epsilon_1}{3}n)\vspace{1.5mm}\\
    \hspace{1cm}\leq \Tilde{C} \e^{-\Tilde{\gamma} n} \hspace{5mm} \forall n>-(1+\tilde{\epsilon})\log c/I^*(M,K,L),
\ena
\eeq
where $\tilde{\epsilon}=\frac{\epsilon+\epsilon_1}{1-\epsilon_1}$, for all $m=1,...,M$.

Now, we can derive the bound on the entire stopping time $\tau$, by using the conclusions of (\ref{eq:max_totalDelay_k}), (\ref{eq:tau_k_1}) and (\ref{eq:tau_k}). For every fixed $\tilde{\epsilon}>0$ there exists $C\triangleq K\cdot\Tilde{C}>0$ and $\Tilde{\gamma}>0$ such that:
\beq
\bea{l}
\label{eq:tau}
    \mathbf{P}_m(\tau > n) \leq \mathbf{P}_m(\max\{\tau^{1},\tau^{2},...,\tau^{k}\} >n) \vspace{1.5mm}\\
    \leq \sum\limits_{k=1}^K \mathbf{P}_m(\tau^{k}>n) 
    \leq K\cdot\Tilde{C} \e^{-\Tilde{\gamma} n}\leq C \e^{-\Tilde{\gamma} n}, \vspace{1.5mm}\\
    \hspace{32mm} \forall n>-(1+\tilde{\epsilon})\log c/I^*(M,K,L).
\ena
\eeq
Finally, applying the tail sum formula for expectation proves the lemma.
\end{proof}\vspace{2mm}

To complete the proof, note that the asymptotic lower bound on the Bayes risk without switching cost (i.e., $s=0$) is given by \cite{cohen2015active}: $ R(\Gamma) \geq \frac{-c\;\log c}{I^*(M,K,L)}$ as $c \rightarrow 0$, which is clearly valid for $s>0$ as well. As a result, applying Statements 1, 2, Lemma \ref{lem:expectedDetectionTime}, and the lower bound on the Bayes risk yields the asymptotic optimality of the CCS policy. \textbf{This proves Statement $3$ of Theorem \ref{th:asymptotic_optimality} for a single anomaly}. \vspace{0.3cm}

The proof of Theorem \ref{th:asymptotic_optimality} for the case of multiple targets $L>1$ follow a similar line of arguments as in the proof for a single anomaly. Hence, we explain here only the slight modifications. First, similar to Lemma \ref{lem:error_upperBound}, it can be verified that declaring the locations of $L$ targets once $\Delta_L S(n) \triangleq S_{m^{(L)}(n)}(n)-S_{m^{(L+1)}(n)}(n)\geq -\log c$, achieves an error probability of $O(c)$. Second, similar to Lemma \ref{lem:tau_m1}, it can be shown that the CCS policy for multiple targets achieves a bounded number of switchings. Third, similar to Lemma \ref{lem:expectedDetectionTime}, it can be verified that the detection time approaches $-\log c/I^*(M,K,L)$. More specifically, all $L$ targets and a fraction $\frac{K-L}{M-L}$ of normal cells are observed at each given time in the asymptotic regime. Therefore, the detection time approaches $\frac{-\log c}{D(g||f)+\frac{K-L}{M-L}D(f||g)}$. \vspace{1mm}
Finally, these three statements yield the asymptotic optimality of the CCS policy for multiple targets, presented in Statement 3 of Theorem \ref{th:asymptotic_optimality}.

\bibliographystyle{ieeetr}

\end{document}